\newtheorem{thm}{Theorem}
\newtheorem{cor}{Corollary}
\newtheorem{lem}{Lemma}
\theoremstyle{definition}
\newtheorem{defn}{Definition}
\newtheorem{remark}{Remark}
\newcommand{\x}{\mathbf x}
\newcommand{\y}{\mathbf y}
\newcommand{\bu}{\mathbf u}
\newcommand{\bv}{\mathbf v}
\newcommand{\V}{\mathbf V}
\newcommand{\C}{\mathbb{C}}
\newcommand{\Z}{\mathbb{Z}}
\newcommand{\OMMP}{{\rm OMMP}}
\newcommand{\argmin}[1]{\mathop{\rm argmin}\limits_{#1}}
\newcommand{\abs}[1]{\lvert#1\rvert}
\newcommand{\aarg}[1]{\underset{#1}{\rm argmin}}
\newcommand{\innerp}[1]{\langle {#1} \rangle}
\newcommand{\myceil}[1]{\left\lceil {#1}\right\rceil}
\newcommand{\myfloor}[1]{\left\lfloor {#1}\right\rfloor}
\date{}
\begin{document}
\bibliographystyle{plain}
\title{the performance of orthogonal multi-matching
pursuit under RIP }
\author{  Zhiqiang Xu}
\thanks{Supported by the National Natural Science Foundation of China (11171336).}
 \maketitle

\begin{abstract}
The orthogonal multi-matching pursuit (OMMP) is a natural extension of the orthogonal matching pursuit (OMP).
We denote the OMMP with the parameter $M$ as $\OMMP(M)$ where $M\geq 1$ is an integer.
The main difference between OMP and $\OMMP(M)$ is that $\OMMP(M)$ selects $M$ atoms per iteration,
while OMP only adds one atom to the optimal atom set. In this paper, we study the performance of orthogonal
multi-matching pursuit under RIP. In particular, we show that, when the measurement matrix $A$ satisfies $(9s, 1/10)$-RIP, there exists an absolute  constant $M_0\leq 8$ so that
$\OMMP(M_0)$ can recover $s$-sparse signal within $s$ iterations.  We furthermore prove that
  $\OMMP(M)$ can recover $s$-sparse signal within $O({s}/{M})$ iterations for a large class of $M$ provided the signal is slowly-decaying. In particular,  for $M=s^a$ with $a\in [0,1/2]$, $\OMMP(M)$ can recover slowly-decaying $s$-sparse signals within $O(s^{1-a})$ iterations. The result implies that $\OMMP$ can reduce the computational
complexity heavily.

\end{abstract}

\section{Introduction}

\subsection{Orthogonal Matching Pursuit }

Orthogonal matching pursuit (OMP) is a popular algorithm for the recovery of sparse signals and it is also commonly used in compressed sensing. Let $A$ be a matrix of size $m\times N$ and $\y$
 be a vector of size $m$.
 The aim of OMP is to find the approximate solution to the following $\ell_0$-minimization problem:
 $$
 \min_{\x\in \C^N}\|\x\|_0 \qquad {\rm s.t.}\qquad A\x=\y,
 $$
 where $\|\x\|_0$ denotes the number of non-zero entries in $\x$.
In compressed sensing and the sparse representation of signals, we often have $m\ll N$. Throughout this paper, we  suppose that the sampling
matrix $A\in \C^{m\times N}$ whose columns $a_1,\ldots,a_N$ are $\ell_2$-normalized.

To introduce the performance of OMP, we first recall the definition of the restricted isometry property (RIP)  \cite{ctrip} which is frequently used in the analysis of the recovering algorithm in compressed sensing.  Following Cand\`{e}s and Tao, for $1\leq s\leq N$ and $\delta\in [0,1)$,
we say that the matrix $A$ satisfies $(s,\delta)$-RIP  if
\begin{equation}\label{eq:con}
(1-\delta) \|\x\|_2^2 \leq \|A \x\|_2^2 \leq (1+\delta) \|\x\|_2^2
\end{equation}
holds for all $s$-sparse signals $\x$.
We say that the signal $\x$ is {\em $s$-sparse} if $\|\x\|_0\leq s$
and use $\Sigma_s$ to denote the set of $s$-sparse signals, i.e.,
$$
\Sigma_s\,\,=\,\,\{\x\in \C^N:\|\x\|_0\leq s\}.
$$
We next state the definition of the spark (see also \cite{spark}).
\begin{defn}
The spark of a matrix $A$ is the size of the smallest linearly dependent subset of columns, i.e.,
$$
{\rm Spark}(A):=\min\{\|\x\|_0:A\x=0, \x\neq 0\}.
$$
\end{defn}

Theoretical analysis of OMP has concentrated primarily on two directions. The first one is to study the condition for the matrix $A$ under which OMP can recover $s$-sparse signals in  exactly $s$ iterations. In this direction, one uses the coherence and RIP to analyze the performance of OMP.  In particular, Davenport and Wakin showed that, when the matrix $A$ satisfies $(s+1, \frac{1}{3\sqrt{s}})$-RIP,  OMP can recover $s$-sparse signal in exactly $s$ iterations \cite{ripomp}.
The sufficient condition is improved to $({s+1}, \frac{1}{\sqrt{s}+1})$-RIP in \cite{rayomp, moomp} (see also \cite{liuomp, huangomp} ).
However, it was observed in \cite{rauhutomp}, when the matrix $A$ satisfies $(c_0s, \delta_0)$-RIP for some fixed constants $c_0>1$ and $0<\delta_0<1$, that $s$ iterations of OMP is not enough to uniformly  recover   $s$-sparse signals, which implies that OMP has to run for more than  $s$ iterations to uniformly recover the $s$-sparse signals. Hence, one investigates the performance of OMP along the second line with  allowing to  OMP run more than $s$ iterations. For this case, it is possible that OMP  add wrong atoms to the optimal atom set, but one can identify the correct atoms by  the least square. A main result in this direction is presented by Zhang \cite{zhangomp} with proving that when $A$ satisfies $(31s, {1}/{3})$-RIP OMP can recover the $s$-sparse signal in at most $30s$ iterations.

 The other type of greedy algorithms, which are based on OMP, have been proposed including the regularized orthogonal matching pursuit (ROMP) \cite{romp}, subspace pursuit (SP) \cite{subspace}, CoSaMP \cite{cosamp}, and many other variants. For each of these algorithms, it has been shown that, under a natural RIP setting, they can recover the $s$-sparse signals in $s$ iterations.

\subsection{Orthogonal Multi-matching Pursuit and Main Results }
A more natural extension of OMP is the orthogonal multi-matching pursuit (OMMP) \cite{liuomp}. We denote the OMMP with the parameter $M$ as $\OMMP(M)$ where $M\geq 1$ is an integer. The main difference between OMP and $\OMMP(M)$ is that $\OMMP(M)$ selects $M$ atoms per iteration, while OMP only adds one atom to the optimal atom set.  The Algorithm 1 outlines the procedure of $\OMMP(M)$ with initial feature set  $\Lambda^0$. In comparision with OMP, OMMP has fewer iterations and computational complexity \cite{huangomp}. We note that, when $M=1$, $\OMMP(M)$ is identical to OMP. OMMP is also studied in \cite{rayomp, huangomp, gomp} under the names of KOMP, MOMP and gOMP, respectively. These results show that, when RIP constant $\delta=O(\sqrt{{{M}}/{{s}}})$, $\OMMP(M)$ can recover the $s$-sparse signal in $s$ iterations.

\begin{algorithm}
\begin{algorithmic}
 \STATE {\bf Input:} sampling matrix $A$, samples $\y=A\x$, candidate number $M$ for each step, stopping iteration index $H$, initial feature set  $\Lambda^0\subset \{1,\ldots,N\}$
 \STATE {\bf
Output:} the $\x^*$.
 \STATE {\bf Initialize:} $ \ell=0$.
 \STATE ${\mathbf x}^0=\aarg{{\mathbf z}: {\rm supp}({\mathbf z})\subset \Lambda^{0}}\|\y-A {\mathbf z}\|_2, {\mathbf r}^0={\mathbf y}-A \x^0$
  \WHILE {  $\ell< H$}
  \STATE {\bf match:} ${\mathbf h}^\ell=A^T{\mathbf r}^\ell$
  \STATE{\bf calculate:} $T^\ell=$
  \text{
  $M$ indices } \text{corresponding  to the largest  magnitude } \text{ entries in the vector  $h^\ell$
     }
  \STATE{\bf identity:} $\Lambda^{\ell+1}=\Lambda^\ell\cup T^\ell$
  \STATE{\bf update:}  ${\mathbf x}^{\ell+1}=\aarg{{\mathbf z}: {\rm supp}({\mathbf z})\subset \Lambda^{\ell+1}}\|\y-A {\mathbf z}\|_2$
  \STATE \,\,\qquad\qquad ${\mathbf r}^{\ell+1}=\y-A {\mathbf x}^{\ell+1}$
  \STATE \,\,\qquad\qquad $\ell=\ell+1$
 \ENDWHILE
 \STATE \,\, $\x^*={\mathbf x}^{H}$
\end{algorithmic}
\caption{\small{$\OMMP(M)$}}
\end{algorithm}

 The aim of this paper is to study the performance of $\OMMP(M)$ under a more natural setting of RIP (the RIP constant is an absolute constant). Particularly, we also would like to understand the relation between the number of iterations and the parameter $M$.
So, we are interested in the following questions:
\begin{enumerate}[{\bf Question} 1]
\item  {\em Does there exist an absolute constant $M_0$ so that $\OMMP(M_0)$ can recover all the $s$-sparse signals within $s$ iterations?}\newline
\item  {\em For $1\leq M\leq s$, can $\OMMP(M)$ recover the $s$-sparse signals within $O({s}/{M})$ iterations?}
\end{enumerate}

We next state one of our main results which gives an affirmative answer to Question 1.

\begin{thm}\label{th:ommpt0}
Let $\x\in \Sigma_s$ and $S={\rm supp}(\x)$.  Suppose that the sampling
matrix $A\in \C^{m\times N}$  satisfies $(9s,{1}/{10})$-RIP and  ${\rm Spark}(A)>\max\{Ms', 8s'\}+\#\Lambda^0 $ where
 $\Lambda^0$ is the initial feature set in $\OMMP$ algorithm.  Then
$\OMMP(M)$ can recover the signal $\x$ within, at most,
 $\max\{s',\frac{8}{M}s'\}$ iterations, where $s':=\# (S\setminus \Lambda^0)$.
\end{thm}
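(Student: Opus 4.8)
The plan is to reduce exact recovery to a purely combinatorial statement — that the true support $S$ becomes contained in the active set $\Lambda^\ell$ within the allotted number of iterations — and then to control the growth of $\Lambda^\ell$ by a counting argument built on two RIP estimates. First I would exploit the orthogonality built into the least-squares update: $\mathbf r^\ell\perp\mathrm{span}\{a_i:i\in\Lambda^\ell\}$, so $\mathbf h^\ell=A^T\mathbf r^\ell$ vanishes on $\Lambda^\ell$ and each selected set $T^\ell$ is disjoint from $\Lambda^\ell$; hence every iteration enlarges the active set by at most $M$ fresh indices and $\#\Lambda^\ell\le\#\Lambda^0+\ell M$. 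If $S\subseteq\Lambda^\ell$ for some $\ell$, then since the iteration budget times $M$ is exactly $\max\{Ms',8s'\}$, the hypothesis $\mathrm{Spark}(A)>\max\{Ms',8s'\}+\#\Lambda^0$ forces $\#\Lambda^\ell<\mathrm{Spark}(A)$, so the columns $\{a_i\}_{i\in\Lambda^\ell}$ are linearly independent; as $\y=A\x$ with $\supp(\x)=S\subseteq\Lambda^\ell$, the least-squares minimizer is unique and equals $\x$, so $\OMMP(M)$ succeeds. Thus it suffices to prove $S\subseteq\Lambda^\ell$ for some $\ell\le\max\{s',8s'/M\}$.

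For the core estimates I would fix an iteration with $S\not\subseteq\Lambda^\ell$ and use the identity $\mathbf r^\ell=(I-P_{\Lambda^\ell})A_{S\setminus\Lambda^\ell}\,\x|_{S\setminus\Lambda^\ell}$, where $P_{\Lambda^\ell}$ is the orthogonal projection onto $\mathrm{span}\{a_i:i\in\Lambda^\ell\}$; note this already discards $\Lambda^0$ and the already-captured part of $S$, so the analysis lives in the complement of $\mathrm{span}(A_{\Lambda^\ell})$ on an index set of size $\le s'$. Two inequalities drive everything: an energy-drop bound $\|\mathbf r^\ell\|_2^2-\|\mathbf r^{\ell+1}\|_2^2\ge\|A_{T^\ell}^T\mathbf r^\ell\|_2^2/(1+\delta)$, from optimality of the update together with the upper RIP bound on $T^\ell$; and a lower bound on the correlation carried by the missing atoms, $\max_{i\in S\setminus\Lambda^\ell}\abs{h_i^\ell}\ge\|\mathbf r^\ell\|_2/(c\sqrt{\#(S\setminus\Lambda^\ell)})$, obtained from $\|\mathbf r^\ell\|_2^2=\innerp{\x|_{S\setminus\Lambda^\ell},\,A_{S\setminus\Lambda^\ell}^T\mathbf r^\ell}$ and the lower RIP bound. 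To make the second estimate magnitude-free I must bound $\|\x|_{S\setminus\Lambda^\ell}\|_2$ by a constant multiple of $\|\mathbf r^\ell\|_2$, i.e. bound below the smallest singular value of the projected submatrix $(I-P_{\Lambda^\ell})A_{S\setminus\Lambda^\ell}$; this is where the $(9s,1/10)$-RIP is spent, and it forces every index set fed to RIP to have size at most $9s$.

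With these in hand I would run a counting argument in the spirit of Zhang's analysis of OMP, comparing along the whole run the number of correct indices captured against the number of false ones. If an iteration captures no new correct atom, then all $M$ selected indices have correlation at least $\max_{i\in S\setminus\Lambda^\ell}\abs{h_i^\ell}$, so the two estimates above force a definite decrease of $\|\mathbf r^\ell\|_2^2$; bounding how many such wasteful selections can occur before $S$ is exhausted yields the total-atom budget $8s'$, hence at most $8s'/M$ iterations — the bound that dominates for $M\le 8$. For $M\ge 8$ the same estimates, used instead to show that enough correct atoms are harvested before $\Lambda^\ell$ can grow too large, give the complementary bound of $s'$ iterations, and the two are packaged as $\max\{s',8s'/M\}$. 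I expect the main obstacle to be precisely this magnitude-independent counting with the sharp constant: extracting the factor $8$ cleanly from $\delta=1/10$, and — the subtler point — keeping every set on which RIP is invoked within order $9s$ as $\Lambda^\ell$ grows and for an arbitrary initial set $\Lambda^0$, which is exactly why one must argue throughout in the orthogonal complement of $\mathrm{span}(A_{\Lambda^\ell})$ rather than with the raw residual.
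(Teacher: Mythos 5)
Your reduction to ``$S\subseteq\Lambda^\ell$ for some $\ell$ within the budget, then invoke the Spark condition to finish by least squares'' is exactly the paper's first step, and your two basic estimates are essentially the paper's Lemma \ref{th:1} (the energy drop $\|\y-A\x^{n+1}\|_2^2\leq\|\y-A\x^n\|_2^2-\|\V^n\|_2^2/(1+\delta_t)$) and a special case of Lemma \ref{le:1}. But the core of your argument --- the ``counting argument comparing correct versus false indices, bounding how many wasteful selections can occur before $S$ is exhausted'' --- has a genuine gap, and it is precisely the step where all the work lies. Your two estimates combine to give only a \emph{multiplicative} per-iteration decrease of $\|\mathbf r^\ell\|_2^2$ by a factor $1-cM/\#(S\setminus\Lambda^\ell)$. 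After $O(s'/M)$ iterations this shrinks the residual by a constant factor $e^{-O(1)}$, never to zero, and a small-but-nonzero residual does not by itself force any index of $S$ into $\Lambda^\ell$. Counting ``wasteful selections against a total-atom budget of $8s'$'' therefore does not close: nothing in your scheme converts residual decay into support capture.

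The paper's mechanism for closing this loop, which your proposal omits, is the magnitude-ordered dyadic decomposition of $\x'=\x_{\overline{\Lambda^0}}$: one defines tail vectors $\tilde{\x}^\ell$ obtained by zeroing the $2^{\ell-1}M$ largest entries, picks the least level $L$ with $\|\tilde{\x}^{L-1}\|_2^2\geq\mu\|\tilde{\x}^L\|_2^2$, and runs Lemma \ref{th:ite} against the \emph{truncated} competitors $\bu^\ell=\x-\tilde{\x}^\ell$ (whose achievable residual $\|A\tilde{\x}^\ell\|_2^2$ is nonzero), spending $k_\ell=\bar k\,\bar U^\ell$ iterations at level $\ell$. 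Summing the resulting geometric recursions gives $\|\y-A\x^K\|_2^2\leq(1+\delta_s)\|\tilde{\x}^{L-1}\|_2^2/(\mu(1-\mu\nu))$ with $K\leq 2^L\bar k\leq 8s'/M$; comparing with the lower bound $(1-\delta_{s+KM})\|\x_{\overline{\Lambda^K}}\|_2^2$ yields $\|\x_{\overline{\Lambda^K}}\|_2^2<\|\tilde{\x}^{L-1}\|_2^2$, and \emph{only because the entries are sorted by magnitude} does this force $\#(S\setminus\Lambda^K)\leq s'-2^{L-2}M-1$. The proof is then completed by induction on $s'$, restarting from $\Lambda^K$ and telescoping the iteration counts --- another structural ingredient absent from your outline. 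So while your preliminary lemmas and the Spark endgame are right, the proposal as written would not yield the theorem; you need the competitor truncation, the choice of the critical level $L$, and the induction on $s'$.
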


The above theorem shows that, when $M\geq 8$, $\OMMP(M)$  with the initial feature set $\Lambda^0=\emptyset$  can recover all the $s$-sparse signal within, at most, $s$ iterations. It implies that there exists an absolute constant $M_0\leq 8$ so that $\OMMP(M_0)$ can recover all the $s$-sparse signals within $s$ iterations. We believe that the constant $M_0=8$ is not optimal. The numerical experiments make us conjecture that the optimal number is $2$, i.e., under RIP, $\OMMP(2)$ can recover the $s$-sparse signal within $s$ iterations.

We next turn to Question 2. The following theorem shows that, when $1\leq M\leq \sqrt{s}$, $\OMMP(M)$ can recover slowly-decaying signal within $O({s}/{M})$ iterations.

\begin{thm}\label{th:ommpt1}
Let $\x\in \Sigma_s$, $S={\rm supp}(\x)$ and $s'=\#(S\setminus \Lambda^0)$.
Consider the $\OMMP(M)$ algorithm with  $1\leq M\leq \sqrt{s'}$ and the initial feature set $\Lambda^0$.
If the sampling matrix $A\in \C^{m\times N}$  satisfies $(9s, \frac{1}{10})$-RIP and
 $$
 {\rm Spark}(A)> 8(C_0^2+2){s'}+\#\Lambda^0,
 $$
    then $\OMMP(M)$ recovers the  $\x$ within  $\myfloor{8(C_0^2+2){s'}/{M}}$ iterations where
$
C_0={\max\limits_{j\in S }\abs{\x_j}}/{\min\limits_{j\in S}\abs{\x_j}}.
$

\end{thm}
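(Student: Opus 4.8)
The plan is to track the residual $\mathbf r^\ell=\y-A\mathbf x^\ell=A\mathbf e^\ell$ with $\mathbf e^\ell=\x-\mathbf x^\ell$, together with the still-missing correct indices $F^\ell:=S\setminus\Lambda^\ell$ and its size $f_\ell:=\#F^\ell$. Write $\mu:=\min_{j\in S}\abs{\x_j}$, so slow decay reads $\mu\le\abs{\x_j}\le C_0\mu$ for every $j\in S$. Since each least-squares update makes $\mathbf r^\ell$ orthogonal to $\mathrm{span}(A_{\Lambda^\ell})$, one has the exact identity $\|\mathbf r^\ell\|_2^2=\innerp{\mathbf h^\ell_{F^\ell},\x_{F^\ell}}$ (the entries of $\mathbf h^\ell$ on $\Lambda^\ell$ vanish, and $\mathbf e^\ell$ agrees with $\x$ on $F^\ell$). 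The first goal is a termination criterion: I would show that if $\|\mathbf r^\ell\|_2^2<(1-1/10)\mu^2$ then $F^\ell=\emptyset$, whence $S\subseteq\Lambda^\ell$ and, because the support size never exceeds the Spark bound, the final least-squares step returns $\x$ exactly. It therefore suffices to count how many iterations drive the residual energy below $(9/10)\mu^2$.

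The engine is a per-iteration \emph{additive} decrease $\|\mathbf r^\ell\|_2^2-\|\mathbf r^{\ell+1}\|_2^2\gtrsim M\mu^2/C_0^2$, independent of $f_\ell$. I would assemble it from four estimates, each invoking $(9s,1/10)$-RIP only on sets of size at most $9s$: (i) a greedy-progress bound $\|\mathbf r^\ell\|_2^2-\|\mathbf r^{\ell+1}\|_2^2\ge\|\mathbf h^\ell_{T^\ell}\|_2^2/(1+1/10)$, from optimality of least squares and $\|A_{T^\ell}^{T}A_{T^\ell}\|\le 1+1/10$ on the size-$M$ set $T^\ell$; (ii) the top-$M$ averaging inequality $\|\mathbf h^\ell_{T^\ell}\|_2^2\ge (M/f_\ell)\|\mathbf h^\ell_{F^\ell}\|_2^2$, valid while $M\le f_\ell$, because $T^\ell$ collects the $M$ largest entries of $\mathbf h^\ell$; (iii) the identity above with Cauchy--Schwarz, giving $\|\mathbf h^\ell_{F^\ell}\|_2\ge\|\mathbf r^\ell\|_2^2/\|\x_{F^\ell}\|_2$; and (iv) the two-sided comparison $(1-1/10)f_\ell\mu^2\le\|\mathbf r^\ell\|_2^2$ together with $\|\x_{F^\ell}\|_2^2\le C_0^2 f_\ell\mu^2$. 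Chaining (i)--(iv) cancels the factor $f_\ell$ and converts the multiplicative gain $M/f_\ell$ into the advertised additive drop.

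For the count I would bound the initial residual by $\|\mathbf r^0\|_2^2\le\|A\x_{F^0}\|_2^2\le(1+1/10)\|\x_{F^0}\|_2^2\le(1+1/10)C_0^2 s'\mu^2$ (least-squares optimality on $\Lambda^0$, then RIP on $S$ and slow decay), and divide by the per-step drop from the previous paragraph. Bookkeeping the $(9s,1/10)$ constants turns this ratio into $\myfloor{8(C_0^2+2)s'/M}$. The Spark hypothesis guarantees that every least-squares subproblem encountered --- whose support stays below $8(C_0^2+2)s'+\#\Lambda^0$ --- is nonsingular, and $S\subseteq\Lambda^L$ forces $\mathbf x^L=\x$. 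In the final step, if $f_\ell<M$ the remaining correct atoms are captured all at once, which only accelerates termination, so the regime $M\le f_\ell$ needed in (ii) covers all but the last iteration.

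The hard part is the lower bound in (iv), $\|\mathbf r^\ell\|_2^2\ge(1-1/10)f_\ell\mu^2$, equivalently a uniform lower bound on the least singular value of the projected atoms $P_\ell A_{F^\ell}$, where $P_\ell$ projects off $\mathrm{span}(A_{\Lambda^\ell})$. This is delicate precisely because $\Lambda^\ell$ may exceed $9s$ (its size can approach the Spark bound $8(C_0^2+2)s'$), so $(9s,1/10)$-RIP cannot be applied to $\mathbf e^\ell$ directly. The crux is to control the leakage $\|(I-P_\ell)A\x_{F^\ell}\|_2^2$ of the missing-support atoms into the already-chosen subspace using only size-$\le 9s$ RIP estimates; here the RIP order $9s$, the constant $1/10$, and the restriction $M\le\sqrt{s'}$ combine to keep this leakage strictly below $\|A\x_{F^\ell}\|_2^2$, so that the residual energy remains proportional to $f_\ell\mu^2$ and each iteration makes uniformly positive progress. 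I expect this leakage estimate, rather than the counting bookkeeping, to be the decisive step.
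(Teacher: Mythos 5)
Your plan hinges on step (iv), the uniform lower bound $\|\mathbf r^\ell\|_2^2\ge(1-\tfrac1{10})f_\ell\mu^2$ (equivalently a lower bound on the least singular value of $P_{\Lambda^\ell}^{\perp}A_{F^\ell}$), and you correctly identify it as the decisive step --- but you do not prove it, and it cannot be extracted from the stated hypotheses. Since $\OMMP(M)$ may add wrong atoms, $\#\Lambda^\ell$ can grow up to roughly $8(C_0^2+2)s'$, which is not bounded by $9s$ (indeed it is unbounded as $C_0\to\infty$), so the $(9s,\tfrac1{10})$-RIP never applies to the set $\Lambda^\ell\cup F^\ell$; and the Spark hypothesis only guarantees that $P_{\Lambda^\ell}^{\perp}A_{F^\ell}$ has trivial kernel, with no quantitative control on how small $\|P_{\Lambda^\ell}^{\perp}A\x_{F^\ell}\|_2$ can be. The same missing bound underlies your termination criterion (``residual below $(9/10)\mu^2$ forces $F^\ell=\emptyset$''), so the gap is not peripheral. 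A secondary issue: even granting (iv), chaining (i)--(iv) gives an additive drop of order $M\mu^2/C_0^2$ per step against an initial energy of order $C_0^2 s'\mu^2$, i.e.\ an iteration count of order $C_0^4 s'/M$, not the claimed $8(C_0^2+2)s'/M$; and the hypothesis $M\le\sqrt{s'}$ plays no identifiable role in your accounting.

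The paper's proof is organized precisely to avoid ever needing a lower bound on $\|\mathbf r^\ell\|_2$ in terms of $f_\ell$. Following Zhang, it measures progress against a \emph{comparison vector} $\bu^\ell=\x-\tilde\x^\ell$ obtained by a dyadic truncation of $\x$ (Lemma \ref{th:ite}): the residual energy decreases geometrically toward the benchmark $\|A\tilde\x^\ell\|_2^2$, with the RIP invoked only on $\supp(\bu^\ell)\cup\Lambda^n$ over a number of iterations bounded by $O(s'/M)$ per round, which is what keeps the RIP order at $9s$. After one round a guaranteed fraction ($2^{L-2}M$ indices) of $S$ has been captured, and the argument restarts by induction on $s'$; the slow-decay constant $C_0$ enters only to guarantee $L\ge2$ (via Theorem \ref{th:ommpt0} when $s'\le C_2M$), and $M\le\sqrt{s'}$ only to absorb the additive $8(C_0^2+1)M$ term into $8(C_0^2+2)s'/M$. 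Your identity $\|\mathbf r^\ell\|_2^2=\mathcal R\langle\mathbf h^\ell_{F^\ell},\x_{F^\ell}\rangle$ and the top-$M$ averaging step are fine and are close in spirit to Lemma \ref{le:1}, but without the comparison-vector device the argument cannot be closed under $(9s,\tfrac1{10})$-RIP.
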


The theorem above shows that, for $1\leq M\leq \sqrt{s}$,  $\OMMP(M)$ can recover $s$-sparse signals within  $C_1 {s}/{M}$  iterations. Here, the constant $C_1$ depends on the signal $\x$. In particular, if we take  $M=\lfloor{s}^{a}\rfloor$  in Theorem {\rm\ref{th:ommpt1},} we have

\begin{cor}  Under the condition of Theorem {\rm\ref{th:ommpt1},}
if  $M=\lfloor{s}^{a}\rfloor$ with $a\in [0,1/2]$, then $\OMMP(M)$ with the initial feature set $\Lambda^0=\emptyset$ recovers the $s$-sparse signal within $\myfloor{8(C_0^2+2)s^{1-a}}$ iterations.
\end{cor}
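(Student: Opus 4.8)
The plan is to read off the corollary as the special case $\Lambda^0=\emptyset$, $M=\myfloor{s^a}$ of Theorem~\ref{th:ommpt1}; essentially all of the work is verifying that the hypotheses persist under this substitution and then simplifying the iteration count. First I would note that with $\Lambda^0=\emptyset$ one has $s'=\#(S\setminus\Lambda^0)=\#\,S=\|\x\|_0$, so $s'\le s$, with $s'=s$ for a full-support (worst-case) $s$-sparse signal. The RIP assumption $(9s,1/10)$-RIP and the spark assumption ${\rm Spark}(A)>8(C_0^2+2)s'+\#\Lambda^0$ are inherited verbatim (this is the meaning of ``under the condition of Theorem~\ref{th:ommpt1}''), so they need no further checking; note only that with $\Lambda^0=\emptyset$ the spark bound reads ${\rm Spark}(A)>8(C_0^2+2)s'$.

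The one hypothesis that must be checked by hand is the range constraint $1\le M\le\sqrt{s'}$. For $a\in[0,1/2]$ and $s\ge1$ we have $M=\myfloor{s^a}\ge\myfloor{1}=1$ and $M=\myfloor{s^a}\le s^a\le s^{1/2}=\sqrt{s}=\sqrt{s'}$ (using $s'=s$), so $\OMMP(M)$ is an admissible instance of the theorem. I would flag here that the restriction to full-support signals is not merely cosmetic: because $M=\myfloor{s^a}$ is tied to $s$ rather than to $s'$, a very sparse signal with $s'\ll s$ could violate $M\le\sqrt{s'}$, so the clean statement is for $s'=s$, which is in any case the worst case for the iteration count.

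With the hypotheses in force, Theorem~\ref{th:ommpt1} yields recovery within $\myfloor{8(C_0^2+2)s'/M}$ iterations, and the remaining task is to bound this by $\myfloor{8(C_0^2+2)s^{1-a}}$. This purely arithmetic step is where I expect the only real friction, and it amounts to reconciling the floor: using $s'\le s$ and $M=\myfloor{s^a}$ one wants $s'/M\le s^{1-a}$. When $s^a\in\Z$ (for instance when $s$ is a perfect $1/a$-th power) this is exact, since then $M=s^a$ and $s'/M\le s/s^a=s^{1-a}$; for general $s$ the replacement of $\myfloor{s^a}$ by $s^a$ inflates the ratio by at most the factor $1+1/\myfloor{s^a}\le2$, so one always has $s'/M\le2s^{1-a}$, and the displayed bound $\myfloor{8(C_0^2+2)s^{1-a}}$ is best read as the clean leading-order form underlying the $O(s^{1-a})$ claim. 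Granting the comparison, monotonicity of $\myfloor{\cdot}$ gives $\myfloor{8(C_0^2+2)s'/M}\le\myfloor{8(C_0^2+2)s^{1-a}}$, and the corollary follows; as a sanity check, $a=0$ gives $M=1$, i.e.\ $\OMMP(1)=\OMP$, and the bound reduces to $O(s)$ iterations.
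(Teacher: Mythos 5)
Your proposal matches the paper's (implicit) argument exactly: the corollary is stated without proof and is meant as the direct substitution $\Lambda^0=\emptyset$ (so $s'=s$), $M=\lfloor s^a\rfloor$ into Theorem~\ref{th:ommpt1}, with the range check $1\le \lfloor s^a\rfloor\le\sqrt{s}$ for $a\in[0,1/2]$ being the only hypothesis to verify. Your observation that $s/\lfloor s^a\rfloor\ge s^{1-a}$ whenever $s^a\notin\Z$, so that the displayed bound $\lfloor 8(C_0^2+2)s^{1-a}\rfloor$ is really only the leading-order form of the bound $\lfloor 8(C_0^2+2)s/\lfloor s^a\rfloor\rfloor$ that the theorem actually delivers, is correct and is a legitimate (minor) criticism of the corollary as literally stated.
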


We next consider the case with $M=\alpha \cdot s$. In particular, for `small' $\alpha$,
we give an affirmative answer to Question 2 up to a log factor.

\begin{thm}\label{th:ommpt2}
Let $\x\in \Sigma_s$ and  $S={\rm supp}(\x)$.
Suppose that the sampling
matrix $A\in \C^{m\times N}$ satisfies $(14s,{1}/{10})$-RIP and
 $$
 {\rm Spark}(A)> 8s\log_2(2(s+1)).
 $$
Consider the $\OMMP(M)$ algorithm with  the initial feature set $\Lambda^0=\emptyset$. If $M=\alpha \cdot s$, then $\OMMP(M)$ recover the $s$-sparse signal $\x$  from $\y=A\x$ within
 $\myceil{\frac{8}{\alpha}\log_2(2(s+1))}$ iterations, where   $0< \alpha\leq {2}/{(C_0^2+2)}$ and
 $
C_0={\max\limits_{j\in S }\abs{\x_j}}/{\min\limits_{j\in S}\abs{\x_j}}.
$
\end{thm}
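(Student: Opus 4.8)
The plan is to run a geometric (halving) recursion on the number of still-missing correct atoms $k_\ell:=\#(S\setminus\Lambda^\ell)$ and to read the iteration count off the decay rate. Since $\Lambda^0=\emptyset$ we have $k_0=\#S\le s$, and the spark hypothesis $\operatorname{Spark}(A)>8s\log_2(2(s+1))$ is exactly large enough that all the least-squares fits are well posed and, the first time some $\Lambda^\ell\supseteq S$, the update $\x^\ell$ is the unique vector supported on $\Lambda^\ell$ consistent with $\y$, namely $\x$ itself; so the algorithm succeeds as soon as $k_\ell=0$. It therefore suffices to prove a per-iteration contraction $k_{\ell+1}\le(1-c\alpha)k_\ell$ for an absolute constant $c$ (the constants will be arranged so that $c=\tfrac{\ln 2}{8}$ works). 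Iterating gives $k_\ell\le s(1-c\alpha)^\ell\le s\,e^{-c\alpha\ell}$, which drops below $1$ once $\ell>\frac{\ln s}{c\alpha}=\frac{8}{\alpha}\log_2 s$, and this is dominated by $\myceil{\frac{8}{\alpha}\log_2(2(s+1))}$; equivalently, each block of $\myceil{8/\alpha}$ iterations at least halves $k_\ell$, and $\log_2(2(s+1))$ halvings exhaust $S$.

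The heart of the matter is the contraction, i.e. that one iteration captures a fixed fraction of the remaining correct atoms, $\#(T^\ell\cap(S\setminus\Lambda^\ell))\ge c\,\tfrac{M}{s}\,k_\ell$. I would build it from two structural facts about the residual. First, since $\x^\ell$ is the least-squares fit on $\Lambda^\ell$, writing $F:=S\setminus\Lambda^\ell$ one has $\mathbf r^\ell=(I-P_{\Lambda^\ell})A_F\x_F$ with $P_{\Lambda^\ell}$ the projector onto the range of $A_{\Lambda^\ell}$; as $\mathbf r^\ell\perp\operatorname{range}A_{\Lambda^\ell}$ this yields the identity $\langle A_F\x_F,\mathbf r^\ell\rangle=\sum_{j\in F}\overline{\x_j}\,(a_j^T\mathbf r^\ell)=\|\mathbf r^\ell\|_2^2$. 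Second, because $T^\ell$ is the set of the $M$ largest entries of $A^T\mathbf r^\ell$ and $\#T^\ell=M\le s$, the $(14s,1/10)$-RIP gives $\|A_{T^\ell}^T\mathbf r^\ell\|_2\le\sqrt{1+\delta}\,\|\mathbf r^\ell\|_2$, so the selection threshold $\tau$ (the $M$-th largest correlation) satisfies $\tau\le\sqrt{(1+\delta)/M}\,\|\mathbf r^\ell\|_2$. Splitting $F$ into the captured part $F\cap T^\ell$ and the uncaptured part $F_u:=F\setminus T^\ell$, every index of $F_u$ has correlation below $\tau$, hence $\|A_{F_u}^T\mathbf r^\ell\|_2\le\sqrt{\#F_u}\,\tau$. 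Inserting these together with $\|A_{F\cap T^\ell}^T\mathbf r^\ell\|_2\le\sqrt{1+\delta}\,\|\mathbf r^\ell\|_2$ into the correlation identity and using Cauchy--Schwarz produces an inequality of the form
\begin{equation*}
\|\mathbf r^\ell\|_2\ \le\ \sqrt{1+\delta}\,\Big(\|\x_{F\cap T^\ell}\|_2+\|\x_{F_u}\|_2\,\sqrt{\#F_u/M}\Big).
\end{equation*}

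To turn this into a lower bound on $\#(F\cap T^\ell)$ I would pair it with a residual lower bound $\|\mathbf r^\ell\|_2\ge\sqrt{1-\delta}\,\|\x_F\|_2$ and the envelopes $\|\x_{F\cap T^\ell}\|_2\le\sqrt{\#(F\cap T^\ell)}\max_{j\in S}\abs{\x_j}$, $\|\x_F\|_2\ge\sqrt{k_\ell}\min_{j\in S}\abs{\x_j}$; the emerging ratio is exactly $C_0$, and the constraint $\alpha\le 2/(C_0^2+2)$ (equivalently $M(C_0^2+2)\le 2s$) is precisely the balance that makes the uncaptured term $\|\x_{F_u}\|_2\sqrt{\#F_u/M}$ subordinate and forces $\#(F\cap T^\ell)$ to be at least a fixed multiple of $\tfrac{M}{s}k_\ell$. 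This comparison requires some care with the constants, but is bookkeeping once the two residual bounds are in hand.

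The main obstacle I anticipate is exactly that residual lower bound $\|\mathbf r^\ell\|_2\gtrsim\|\x_F\|_2$. The naive route invokes RIP on $F\cup\Lambda^\ell$, but $\#\Lambda^\ell$ grows up to $8s\log_2(2(s+1))$ and dwarfs the $14s$ on which RIP is assumed, so $P_{\Lambda^\ell}$ cannot be controlled head-on. The point of the argument must be to never apply RIP to the whole $\Lambda^\ell$: one uses the spark hypothesis solely to keep the projections well defined, and confines every RIP invocation to sets of size $\le s+M\le 2s$ (the increase from $9s$ to $14s$ in the hypothesis leaves room for the handful of auxiliary comparison sets). Here the telescoping identity $\mathbf r^{\ell+1}=(I-P_{\Lambda^{\ell+1}})A_{F_u}\x_{F_u}$ with $F_u=S\setminus\Lambda^{\ell+1}$, together with residual monotonicity $\|\mathbf r^{\ell+1}\|_2\le\|\mathbf r^\ell\|_2$, is what keeps the effective supports small across all iterations. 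Making the residual lower bound with only $O(s)$-sized RIP, and then tracking the numerical constants so that $\delta=1/10$, the factor $8$, and $\alpha\le 2/(C_0^2+2)$ combine into a clean factor $(1-c\alpha)$, is the delicate part; the halving recursion and the spark-based exact-recovery conclusion are then routine.
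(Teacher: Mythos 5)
Your plan hinges on a per-iteration support contraction $\#\bigl(T^\ell\cap(S\setminus\Lambda^\ell)\bigr)\ge c\,\frac{M}{s}\,k_\ell$, and this is where the argument breaks. Your correlation identity and threshold bound are fine (they are the pointwise analogue of the block $\ell_1$/$\ell_\infty$ duality in Lemma \ref{le:1}), but assembling them with the envelopes gives, with $F=S\setminus\Lambda^\ell$ and $F_u=F\setminus T^\ell$,
$$
\sqrt{1-\delta}\,\|\x_F\|_2\;\le\;\sqrt{1+\delta}\Bigl(\|\x_{F\cap T^\ell}\|_2+\|\x_{F_u}\|_2\sqrt{\#F_u/M}\Bigr),
$$
where the uncaptured term is as large as $C_0\,\#F_u/\sqrt{Mk_\ell}$ times $\|\x_F\|_2$. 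When $k_\ell\approx s$ and $\#F_u\approx k_\ell$ this factor is about $C_0\sqrt{k_\ell/(\alpha s)}\ge C_0\sqrt{(C_0^2+2)/2}>1$, so the inequality is consistent with $F\cap T^\ell=\emptyset$ and yields no contraction at all; shrinking $\alpha$ only inflates $\sqrt{\#F_u/M}$, so the hypothesis $\alpha\le 2/(C_0^2+2)$ cannot be the balancing ingredient you hope for. (In the paper that hypothesis plays an entirely different role: it forces $\|\tilde{\x}^0\|_2^2<\mu\|\tilde{\x}^1\|_2^2$, i.e.\ that the dyadic level index $L$ is at least $2$.) This is not a constant-chasing issue: the reason the Zhang-style analysis in Lemma \ref{le:smalls} measures progress by the energy $\|\y-A\x^n\|_2^2$ against the hierarchy of tail benchmarks $\|A\tilde{\x}^\ell\|_2^2$, rather than by the count of captured atoms, is precisely that a single $\OMMP$ step need not capture any correct atom.

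The second gap is the residual lower bound $\|\mathbf r^\ell\|_2\ge\sqrt{1-\delta}\,\|\x_F\|_2$ that you yourself flag: it genuinely requires RIP on $S\cup\Lambda^\ell$, and ``telescoping plus residual monotonicity'' is not a substitute, since nothing stops ${\rm span}(A_{\Lambda^\ell})$ from nearly containing $A_F\x_F$ while $F\not\subset\Lambda^\ell$ once $\#\Lambda^\ell$ exceeds the RIP order. The paper's proof is organized to dodge exactly this: RIP is only ever applied to $\x-\x^K$ where $K$ is the iteration count of a \emph{single} induction stage, chosen so that $s+KM\le 13s$ (hence the $14s$ in the hypothesis), and the $O\bigl(\frac{s}{M}\log s\bigr)$ total count is assembled by an induction on $s'=\#(S\setminus\Lambda^0)$ in which each stage removes at least $2^{L-2}\frac{M}{s}s'+1$ missing atoms and the per-stage counts telescope. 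Your proposal reproduces neither the benchmark vectors $\tilde{\x}^\ell$, nor the block structure, nor the induction, and without them both of its key inequalities fail.
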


\begin{remark}
We prove the main results
using some of the techniques developed by Zhang in his study of
OMP \cite{zhangomp} (see also \cite{foucartomp}). To make the paper more readable,  we state our results for the strictly sparse signal.  In fact, using a similar method, one also can extend the results in this paper to the case where the measurement vector $\y$ is subjected to an additive noise and $\x$ is not strictly sparse.
\end{remark}

\begin{remark}
In \cite{liuomp}, Liu and Tymlyakov proved that, when $A$ satisfies $(M_0, \delta)$-RIP with $\delta={\sqrt{M_0}}/{((2+\sqrt{2})\sqrt{s})}$, $\OMMP(M_0)$ can recover $s$-sparse signal within, at most, $s$ iterations. The result requires the RIP constant $\delta$ depends on $s=\|\x\|_0$. In Theorem \ref{th:ommpt0}, we require that the measurement matrix $A$ satisfies $(9s,\delta)$-RIP with $\delta$ being an absolute constant $1/10$. Hence, Theorem \ref{th:ommpt0} gives an affirmative answer to Question 1 under the more natural setting for the measurement matrix $A$.
\end{remark}

\begin{remark}
It is of interest to know which matrices $A$ obey the $(s,\delta)$-RIP and the ${\rm Spark}(A)>K$  where $K$ is a fixed constant. Much is known about finding matrices that satisfy the $(s,\delta)$-RIP (see \cite{bound,  cai1,caixu, vershynin, xu}). If we draw a random $m\times N$ matrix $A$ whose entries are i.i.d. Gaussian random variables, then ${\rm Spark}(A)=m$ with probability $1$ (see \cite{spark,spark1}). Moreover, the random matrix  $A$ also satisfies $(s,\delta)$-RIP with high probability provided
$$
m=O\left(\frac{s\log(N/s)}{\delta^2}\right).
$$
So, to make  the random matrices $A$ obey the $(s,\delta)$-RIP and the ${\rm Spark}(A)>K$, one can take
$$m=\max\left\{O\left(\frac{s\log(N/s)}{\delta^2}\right),K+1 \right\}.$$
\end{remark}
\vspace{0.5cm}

\section{Numerical experiments}

The purpose of the experiment is the comparison for the reconstruction performances of and the
 iteration number of $\OMMP(M)$ with different parameter $M$. Given the parameters $m=300$ and $N=1,500$, we randomly generate a $m\times N$ sampling matrix $A$ from the standard i.i.d Gaussian ensemble.
The support set $S$ of the sparse signal $\x$ is drawn from the uniform
distribution over the set of all subsets of $[1,N]\cap \Z$ of size $s$.
We then generate the sparse signal  $\x$  according to the probability model:
 the entries $\x_j, j\in S,$ are independent random
variable having the Gaussian distribution with mean $5$ and standard deviation $1$.

We apply the $\OMMP(M)$ to recover the sparse signal $\x$ from $\y=A\x$ for different parameters $M\in \{1, \lfloor \sqrt{s}\rfloor, \lfloor \frac{s}{2}\rfloor\}$. Note that when $M=1$, $ \OMMP(M)$ is identical with OMP.  We repeat the experiment 200 times for each number $s\in \{1,2,\ldots,80\}$ and calculate the success rate. When $\OMMP$ succeeds, we record the number of the iteration steps.
 The left graph in Fig. 1   depicts the success rate  of the reconstructing algorithm $\OMMP(M)$ with  $ M \in \{1, \lfloor \sqrt{s}\rfloor, \lfloor \frac{s}{2}\rfloor\}$. The number of the average iteration steps of $\OMMP(M)$ with  $ M \in \{1, \lfloor \sqrt{s}\rfloor, \lfloor \frac{s}{2}\rfloor\}$ are illustrated in the right graph in  Fig. 1. The numerical results show that the performance of $\OMMP(M), M\in \{\lfloor \sqrt{s}\rfloor, \lfloor \frac{s}{2}\rfloor\}$, is similar with that of OMP, while the number of iteration steps of $\OMMP(M),  M \in \{\lfloor \sqrt{s}\rfloor, \lfloor \frac{s}{2}\rfloor\}$, is far less than that of OMP, which agrees with the theoretical results presented in this paper.

  \begin{figure}[!ht]
\begin{center}
\epsfxsize=5cm\epsfbox{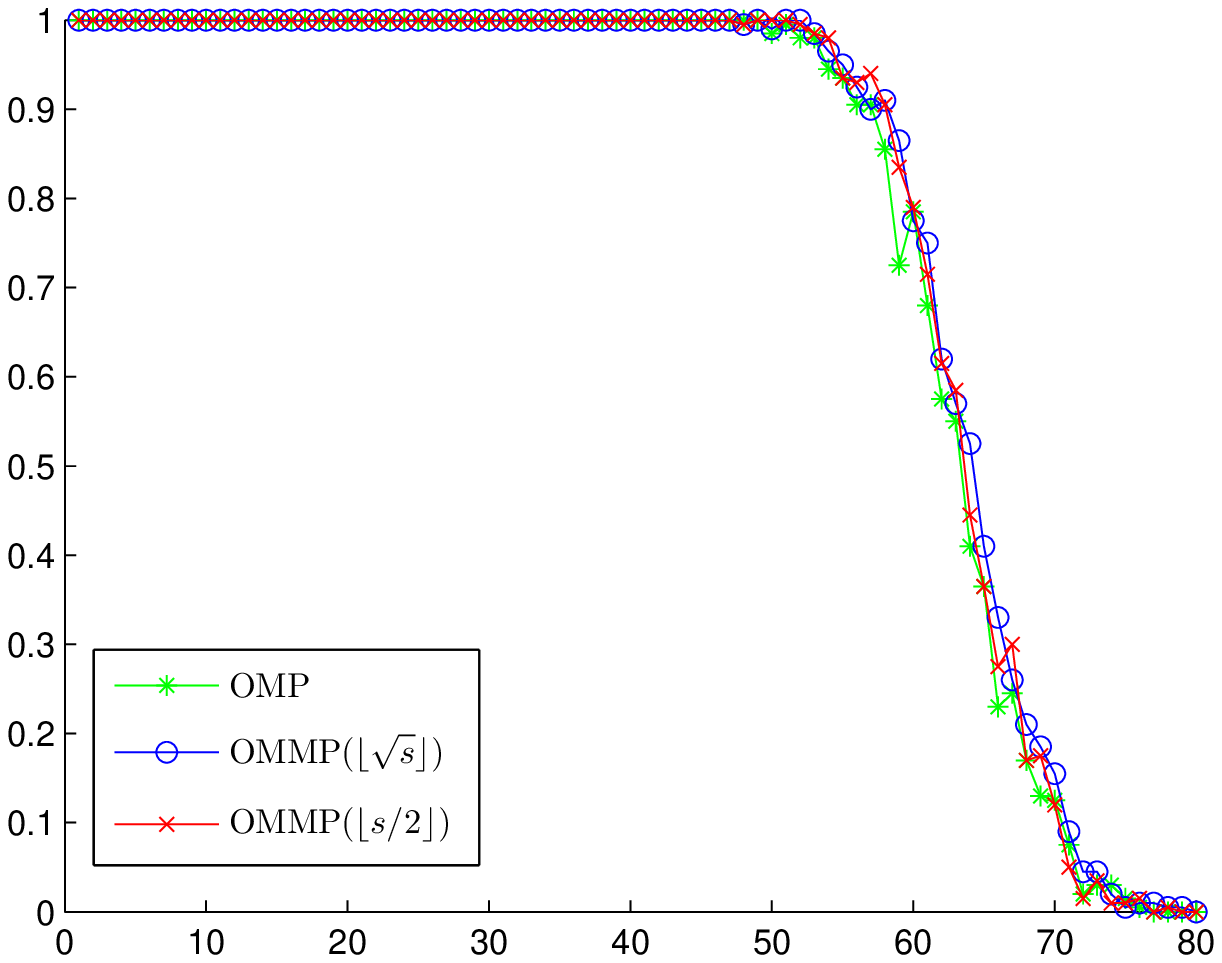}\,\,\,\,\,\,\,\,\, \epsfxsize=5cm\epsfbox{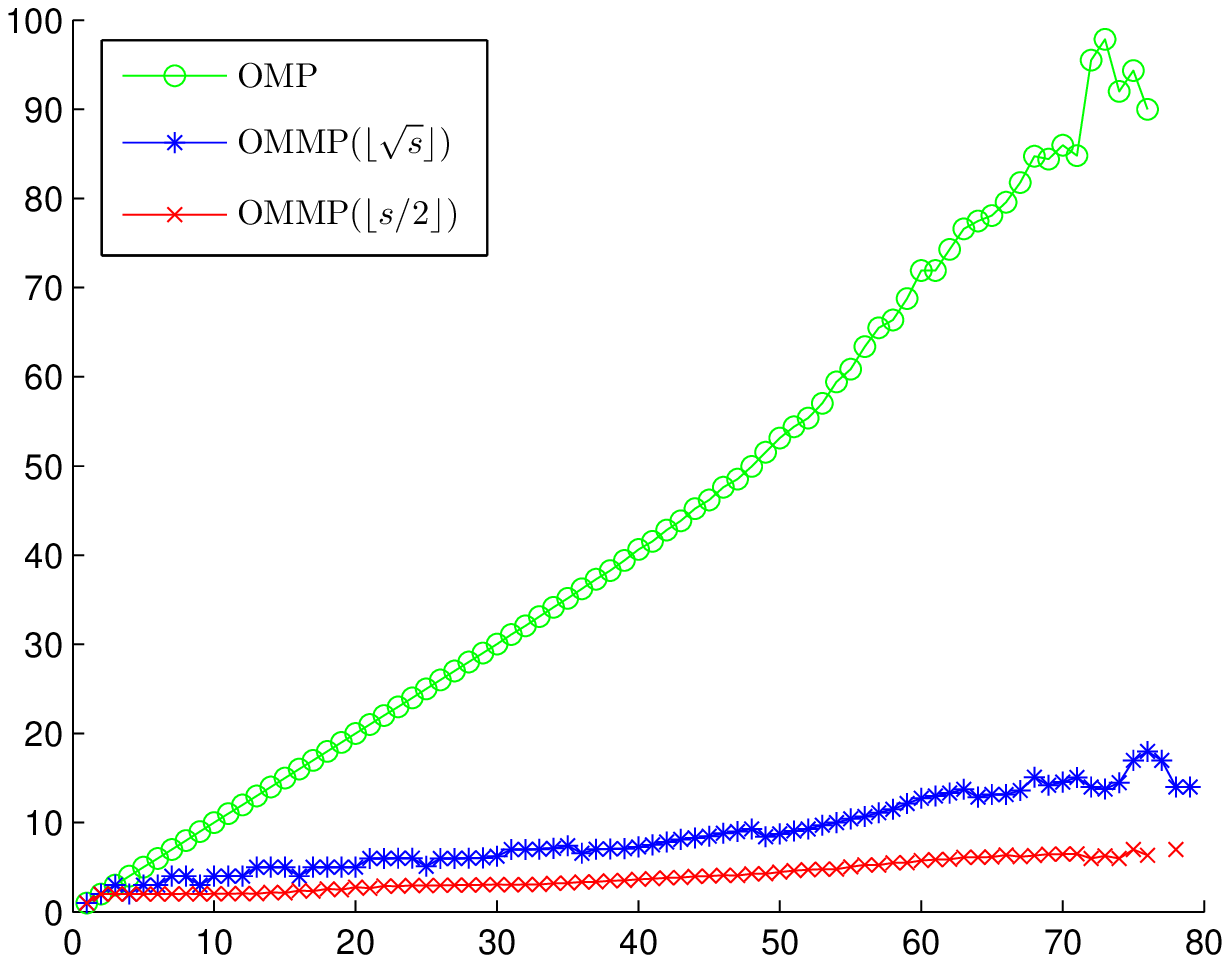}
\end{center}
\caption{\small{Numerical experiments for the  sparse signals.  The left graph corresponds to the success rates of $\OMMP(M), M\in \{1, \lfloor \sqrt{s}\rfloor, \lfloor \frac{s}{2}\rfloor\} $, whereas the right one depicts the number of the average iteration steps of   $\OMMP(M), M\in \{1, \lfloor \sqrt{s}\rfloor, \lfloor \frac{s}{2}\rfloor\} $. }}

\end{figure}

\section{Extension}

According to Theorem \ref{th:ommpt1} and Theorem \ref{th:ommpt2}, $\OMMP$ has a good performance for the slowly-decaying sparse signal $\x$. Naturally, one  may want to know whether $\OMMP(M)$ can recover all the $s$-sparse signal within less than $s$ iterations for some $M\in [1, s]\cap \Z$.  Numerical experiments show that, for some fast-decaying $s$-sparse signal $\x$, $\OMMP(M)$ has to run at least $s$ steps to recover $\x$ for any  $M\in [1, s]\cap \Z$.  However, as shown in \cite{ripomp}, when the $s$-sparse signal $\x$ is fast-decaying, OMP has a good performance. To state the result in \cite{ripomp}, we firstly introduce the definition of {\em $\alpha$-decaying} signals.
For any $s$-sparse signal $\x\in \C^N$, we denote by $S$ the support of $\x$. Without loss of generality, we suppose that $S=\{j_1,\ldots,j_s\}$ and
$$
\abs{\x_{j_1}}\geq \abs{\x_{j_2}}\geq \cdots \geq \abs{\x_{j_s}} > 0.
$$
For $\alpha>1$, we call the $\x$ {\em $\alpha$-decaying} if $\abs{\x_{j_t}}/\abs{\x_{j_{t+1}}}\geq \alpha$ for all $t\in \{1,2,\ldots,s-1\}$.

\begin{thm}{\rm (\cite{ripomp})}\label{th:omps}
Suppose that $A$ satisfies $(s+1,\delta_{s+1})$-RIP with $\delta_{s+1}<\frac{1}{3}$. Suppose that $\x$ with $\|\x\|_0\leq s$ is $\alpha$-decaying signal.
If
\begin{equation}\label{eq:alphalow}
\alpha\,\,>\,\,\frac{1+2\frac{\delta_{s+1}}{1-\delta_{s+1}}\sqrt{s-1}}{1-2\frac{\delta_{s+1}}{1-\delta_{s+1}}},
\end{equation}
then OMP will recover $\x$ exactly from $\y=A\x$ in $s$ iterations.
\end{thm}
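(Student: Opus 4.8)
\emph{The plan is to prove the theorem by an induction on the iteration index that tracks not merely that OMP picks a correct atom, but that it picks the support atoms in decreasing order of magnitude.} Concretely, I would prove the invariant $\Lambda^\ell=\{j_1,\dots,j_\ell\}$ for $\ell=0,1,\dots,s$. Once it holds through $\ell=s$ we have $\Lambda^s=S$, and since $(s+1,\delta_{s+1})$-RIP with $\delta_{s+1}<1/3$ forces $A_S$ to have full column rank, the closing least-squares step returns the unique $\mathbf z$ with $\supp(\mathbf z)\subset S$ and $A\mathbf z=\y$, namely $\mathbf z=\x$. Thus the whole statement reduces to the one-step claim: if $\Lambda^\ell=\{j_1,\dots,j_\ell\}$ with $\ell<s$, then $\abs{\innerp{a_{j_{\ell+1}},\mathbf r^\ell}}$ strictly exceeds $\abs{\innerp{a_j,\mathbf r^\ell}}$ for every other $j\notin\Lambda^\ell$.

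To analyse one step, write $L=\Lambda^\ell$ and $W=S\setminus L=\{j_{\ell+1},\dots,j_s\}$. The residual is $\mathbf r^\ell=A\mathbf z$ with $\mathbf z=\x-\x^\ell$; because $\x^\ell$ is supported on $L$ we have the exact identity $\mathbf z|_W=\x|_W$, and the orthogonality $A_L^T\mathbf r^\ell=0$ pins down the leakage onto the already-chosen atoms, $\mathbf z|_L=-(A_L^TA_L)^{-1}A_L^TA_W\,\x|_W$. I would record the RIP toolbox obtained from $\delta:=\delta_{s+1}$: the disjoint-support bound $\abs{\innerp{A_P\mathbf u,A_Q\mathbf v}}\le\delta\|\mathbf u\|_2\|\mathbf v\|_2$ for $P\cap Q=\emptyset$, $\abs P+\abs Q\le s+1$; the inverse bound $\|(A_L^TA_L)^{-1}\|\le(1-\delta)^{-1}$; and, combining these, $\|\mathbf z|_L\|_2\le\gamma\,\|\x|_W\|_2$ where $\gamma:=\delta/(1-\delta)$. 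It is worth flagging that $\delta_{s+1}<1/3$ says precisely $\gamma<1/2$, which is exactly what keeps the denominator $1-2\gamma$ of \eqref{eq:alphalow} positive; this is the structural origin of the $1/3$ threshold.

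For the one-step estimate, expand $\innerp{a_j,\mathbf r^\ell}=\innerp{a_j,A_W\x|_W}+\innerp{a_j,A_L\mathbf z|_L}$. Isolating the diagonal entry for $j=j_{\ell+1}$ and using the toolbox gives $\abs{\innerp{a_{j_{\ell+1}},\mathbf r^\ell}}\ge\abs{x_{j_{\ell+1}}}-\delta\|\x|_W\|_2-\delta\gamma\|\x|_W\|_2$, while for a competitor $j\in\{j_{\ell+2},\dots,j_s\}$ the same expansion gives $\abs{\innerp{a_j,\mathbf r^\ell}}\le\abs{x_{j_{\ell+2}}}+\delta\|\x|_W\|_2+\delta\gamma\|\x|_W\|_2$ (using $\abs{x_j}\le\abs{x_{j_{\ell+2}}}$), and for $j\notin S$ the $\abs{x_{j_{\ell+2}}}$ term disappears. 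The pleasant simplification is that $\delta(1+\gamma)=\delta/(1-\delta)=\gamma$, so both bounds collapse to
\[
\abs{\innerp{a_{j_{\ell+1}},\mathbf r^\ell}}\ge\abs{x_{j_{\ell+1}}}-\gamma\|\x|_W\|_2,\qquad
\max_{j}\abs{\innerp{a_j,\mathbf r^\ell}}\le\abs{x_{j_{\ell+2}}}+\gamma\|\x|_W\|_2 .
\]
Hence the claim follows once $\abs{x_{j_{\ell+1}}}-\abs{x_{j_{\ell+2}}}>2\gamma\|\x|_W\|_2$. Bounding the tail by $\|\x|_W\|_2\le\abs{x_{j_{\ell+1}}}+\sqrt{s-1}\,\abs{x_{j_{\ell+2}}}$ (from $\sqrt{a^2+b^2}\le a+b$ and $\|\x|_{W\setminus\{j_{\ell+1}\}}\|_2\le\sqrt{s-1}\,\abs{x_{j_{\ell+2}}}$), the desired inequality rearranges to $\abs{x_{j_{\ell+1}}}/\abs{x_{j_{\ell+2}}}>(1+2\gamma\sqrt{s-1})/(1-2\gamma)$; since $\alpha$-decay makes the left ratio at least $\alpha$, condition \eqref{eq:alphalow} closes the induction, the incorrect atoms $j\notin S$ being controlled a fortiori since their bound is the smaller one.

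The step I expect to be the main obstacle is the honest treatment of the leakage term $\mathbf z|_L$: a priori the projection onto the previously chosen atoms could contaminate the correlations in a way that destroys the clean threshold, and the heart of the argument is verifying that this contamination enters only through $\|\mathbf z|_L\|_2\le\gamma\|\x|_W\|_2$ and then merges with the off-diagonal interaction via the identity $\delta(1+\gamma)=\gamma$, leaving a single factor $\gamma$. The remaining subtlety is tightness of the bookkeeping — keeping the $\ell_2$ tail estimate at a single $\sqrt{s-1}$ rather than a cruder factor, and confirming that the binding comparison is against the next-largest true coordinate $x_{j_{\ell+2}}$ so that \eqref{eq:alphalow}, and not a weaker condition, is what is genuinely required. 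Everything else is routine once the invariant and the two residual identities are in place.
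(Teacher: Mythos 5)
The paper itself contains no proof of Theorem \ref{th:omps}; it is imported verbatim from \cite{ripomp}, and the only argument in the text is for the sharpened version, Theorem \ref{th:ompss}. Your proposal is essentially the original Davenport--Wakin argument, and it is correct: the invariant $\Lambda^\ell=\{j_1,\dots,j_\ell\}$, the leakage bound $\|\mathbf z_L\|_2\le\gamma\|\x_W\|_2$ with $\gamma=\delta/(1-\delta)$ coming from $\|(A_L^TA_L)^{-1}\|\le(1-\delta)^{-1}$ and the disjoint-support bound $\|A_L^TA_W\|\le\delta$ (all index sets involved have total size at most $s+1$, which is where the $(s+1)$-RIP is used), the collapse $\delta(1+\gamma)=\gamma$, and the tail estimate $\|\x_W\|_2\le\abs{\x_{j_{\ell+1}}}+\sqrt{s-1}\,\abs{\x_{j_{\ell+2}}}$ assemble exactly into \eqref{eq:alphalow}, with $\delta_{s+1}<1/3$ being precisely the condition $1-2\gamma>0$. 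The only loose ends are the final step $\ell=s-1$, where no $j_{\ell+2}$ exists and the competitor bound degenerates to the $j\notin S$ case, and the observation that atoms of $L$ have zero correlation with $\mathbf r^\ell$ while the selected correlation is strictly positive; both are immediate. It is worth contrasting your route with what the paper does for Theorem \ref{th:ompss}: there the coordinatewise correlation comparison is avoided entirely in favour of the residual-energy descent of Lemma \ref{th:ite} (the Zhang--Foucart machinery), taking $\bu=\x_{\Lambda^{n-1}\cup\{j^{n-1}\}}$ and deducing $\|\x_{\overline{\Lambda^{n}}}\|_2^2\le\beta\|\x_{\overline{\Lambda^{n-1}}}\|_2^2$ with $\beta<1$, which forces each newly selected atom into the support. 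That route buys a threshold on $\alpha$ free of the $\sqrt{s-1}$ factor, because the $\ell_2$ tail is absorbed through the decay assumption into a single $1/\alpha^2$ rather than being compared entrywise against $\x_{j_{\ell+2}}$; your route buys the finer structural conclusion that OMP picks the support atoms in decreasing order of magnitude, which the energy argument does not give.
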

In this paper, motivated by the proof of Theorem \ref{th:ommpt0}, we  can improve Theorem \ref{th:omps} as follows:
\begin{thm}\label{th:ompss}
Suppose that $A$ satisfies $(s,\delta_s)$-RIP with $\delta_s <\sqrt{2}-1$. Suppose that $\x\in \C^N$ with $\|\x\|_0\leq s$ is $\alpha$-decaying. If
\begin{equation}\label{eq:alphaup}
 \alpha\,\, >\,\, \sqrt{\frac{1+\delta_s}{2-(1+\delta_s)^2}},
\end{equation}
then OMP can recover $\x$ exactly from $\y=A\x$ in $s$ iterations.
\end{thm}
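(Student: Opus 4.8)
The plan is to prove, by induction on the iteration count, that $\OMP$ selects a correct atom at every step, so that after $s$ steps the selected set equals $S=\supp(\x)$ and the least-squares update returns $\x$ exactly (the matrix $A_S$ is injective under $(s,\delta_s)$-RIP). Write $\delta:=\delta_s$ and order the support so that $\abs{\x_{j_1}}\ge\cdots\ge\abs{\x_{j_s}}>0$. Assuming the algorithm has already picked $\Lambda^\ell=\{j_1,\dots,j_\ell\}$, set $T=S\setminus\Lambda^\ell$ and let $p=j_{\ell+1}$ be the largest surviving index. The inductive step reduces to the single scalar inequality
\[
\abs{\innerp{a_p,\mathbf r^\ell}}\;>\;\max_{k\notin S}\abs{\innerp{a_k,\mathbf r^\ell}},
\]
since then the global maximiser of $\abs{\mathbf h^\ell}$ must lie in $S$. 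The whole point is to verify this using $\delta=\delta_s$ (order $s$, not $s+1$) and to feed in the $\alpha$-decay only through the elementary tail sums $\sum_{r\ge0}\alpha^{-2r}=\alpha^{2}/(\alpha^{2}-1)$, which is what replaces the $\sqrt{s-1}$ factor of \eqref{eq:alphalow} by an absolute constant.

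First I would record the residual in the convenient form $\mathbf r^\ell=A_S\mathbf v$, where $\mathbf v$ is supported on $S$ with $\mathbf v_T=\x_T$ and $\mathbf v_{\Lambda^\ell}=-(A_{\Lambda^\ell}^{*}A_{\Lambda^\ell})^{-1}A_{\Lambda^\ell}^{*}A_T\x_T$ the least-squares correction forced by $\mathbf r^\ell\perp\mathrm{range}(A_{\Lambda^\ell})$. The decay hypothesis controls $\mathbf v_T$ but says nothing about $\mathbf v_{\Lambda^\ell}$, so this correction has to be dominated purely by RIP: restricted orthogonality at order $s$ (available because $\#(\Lambda^\ell\cup T)=s$) gives $\|A_{\Lambda^\ell}^{*}A_T\x_T\|_2\le\delta\|\x_T\|_2$ together with $\|(A_{\Lambda^\ell}^{*}A_{\Lambda^\ell})^{-1}\|\le(1-\delta)^{-1}$, whence $\|\mathbf v_{\Lambda^\ell}\|_2\le\frac{\delta}{1-\delta}\|\x_T\|_2$.

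The two correlations are then estimated by peeling off the single top atom $a_p$, which is the device that keeps every RIP invocation at order $s$ (including at the first step, where $T=S$). For the lower bound, $\innerp{a_p,\mathbf r^\ell}=\x_p+\innerp{a_p,A_{S\setminus\{p\}}\mathbf v_{S\setminus\{p\}}}$ and $\{p\}\cup(S\setminus\{p\})=S$ has size $s$, so
\[
\abs{\innerp{a_p,\mathbf r^\ell}}\;\ge\;\abs{\x_p}-\delta\,\|\mathbf v_{S\setminus\{p\}}\|_2 .
\]
For any $k\notin S$ the same split gives $\innerp{a_k,\mathbf r^\ell}=\x_p\innerp{a_k,a_p}+\innerp{a_k,A_{S\setminus\{p\}}\mathbf v_{S\setminus\{p\}}}$, where now $\{k,p\}$ and $\{k\}\cup(S\setminus\{p\})$ have sizes $2$ and $s$; this avoids any appeal to order-$(s+1)$ RIP and yields $\abs{\innerp{a_k,\mathbf r^\ell}}\le\delta\bigl(\abs{\x_p}+\|\mathbf v_{S\setminus\{p\}}\|_2\bigr)$.

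It remains to substitute the decay estimates. Since $\|\mathbf v_{S\setminus\{p\}}\|_2^2=\|\x_{T\setminus\{p\}}\|_2^2+\|\mathbf v_{\Lambda^\ell}\|_2^2$, the tail bounds $\|\x_{T\setminus\{p\}}\|_2^2\le\abs{\x_p}^2/(\alpha^2-1)$ and $\|\x_T\|_2^2\le\abs{\x_p}^2\alpha^2/(\alpha^2-1)$ show that $\|\mathbf v_{S\setminus\{p\}}\|_2$ is of order $\abs{\x_p}/\sqrt{\alpha^2-1}$, so the gain $\abs{\x_p}$ dominates the pooled error precisely when $\alpha$ is large enough. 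Comparing the two displays, a sufficient condition is $\abs{\x_p}(1-\delta)>2\delta\,\|\mathbf v_{S\setminus\{p\}}\|_2$; squaring and collecting terms, the coefficient of $\alpha^{2}$ carries the factor $(1-\delta)^2-2\delta^2=2-(1+\delta)^2$, which is positive exactly when $\delta<\sqrt2-1$ and is the very factor displayed in \eqref{eq:alphaup}, so solving for $\alpha$ returns the stated threshold. The main obstacle, and the place where the improvement over Theorem \ref{th:omps} is won, is the sharp treatment of the least-squares correction $\mathbf v_{\Lambda^\ell}$: it is the only portion of the residual untouched by the decay hypothesis, and controlling it through order-$s$ restricted orthogonality combined with the top-atom peeling---rather than through the coherence and $\ell_1$ estimates that force the $\sqrt{s-1}$ loss---is what lets $\delta$ be an absolute constant.
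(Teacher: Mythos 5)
Your proof is correct in substance but follows a genuinely different route from the paper's. The paper never compares correlations: it feeds $\bu=\x_{\Lambda^{n-1}\cup\{j^{n-1}\}}$ with $t=1$ into the residual-decrease estimate of Lemma \ref{th:ite}, obtains $\|\x_{\overline{\Lambda^{n}}}\|_2^2\leq\beta\|\x_{\overline{\Lambda^{n-1}}}\|_2^2$ with $\beta=\frac{1+\delta_s}{1-\delta_s}\left(\delta_s+\alpha^{-2}\right)$, and concludes that the newly selected atom lies in $\supp(\x)$ from the strict decrease of $\|\x_{\overline{\Lambda^{n}}}\|_2$; the threshold \eqref{eq:alphaup} is precisely the condition $\beta<1$. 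You instead run the classical Davenport--Wakin selection argument, writing $\mathbf r^\ell=A_S\bv$, controlling the least-squares correction $\bv_{\Lambda^\ell}$ through restricted orthogonality at order $s$, and peeling off the top atom so that no order-$(s+1)$ quantity ever appears. Both work: the paper's route is shorter because it recycles the machinery already built for the OMMP theorems, while yours is self-contained and in fact yields a slightly \emph{weaker} sufficient condition than \eqref{eq:alphaup} --- your final inequality solves to $\alpha^2>\frac{(1-\delta)^2[(1-\delta)^2+4\delta^2]}{[(1-\delta)^2-2\delta^2][(1-\delta)^2+2\delta^2]}$, which one checks is at most $\frac{1+\delta}{2-(1+\delta)^2}$ whenever $\delta<\sqrt{2}-1$, so the stated hypothesis certainly suffices; but your closing claim that the algebra ``returns the stated threshold'' is not literally accurate. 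One further slip: your induction hypothesis asserts $\Lambda^\ell=\{j_1,\ldots,j_\ell\}$, yet the inductive step only shows that the selected index lies in $S$, not that it equals $j_{\ell+1}$. You should weaken the hypothesis to $\Lambda^\ell\subset S$ (as the paper does), take $p$ to be the largest-magnitude element of $S\setminus\Lambda^\ell$, and observe that all of your tail estimates remain valid when $\Lambda^\ell$ is an arbitrary subset of the support.
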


\begin{remark}
In Theorem \ref{th:omps}, the right side of (\ref{eq:alphalow}) depends on RIP constant and  $s=\|\x\|_0$, while in Theorem \ref{th:ompss}, the right side of (\ref{eq:alphaup}) only depends on the RIP constant. So, Theorem \ref{th:ompss} is an improvement over Theorem \ref{th:omps}.
\end{remark}

 \begin{appendices}

    \renewcommand{\thesection}{{\bf Appendix \Alph{section}}}

\section{Lemmas}
In this section, we introduce many lemmas, which extend some results in \cite{foucartomp}.
To state conveniently, for any set $T\subset\{1,\ldots,N\}$ of column indices,
we denote by $A_T$ the $m\times \# T$ matrix composed of these
columns. Similarly, for a vector $\x\in \C^N$, we use $\x_T$ to denote
the vector formed by the entries of $\x$ with indices from $T$.  For $\bu\in \C^N$ and $t\in \Z_+$, we extend the $\ell_1$-norm to a generalized $\ell_1$-norm defined as

\begin{eqnarray*}
\|\bu\|_{t,1}:=\sum_{j=0}^{\myfloor{N/t}-1} \sqrt{\bu_{jt+1}^2+\cdots
+\bu_{(j+1)t}^2}
+\sqrt{\bu_{n_0t+1}^2+\cdots+\bu_N^2}.
\end{eqnarray*}
Similarly, we also can extend the $\ell_\infty$-norm as follows
\begin{eqnarray*}
\|\bu\|_{t,\infty}&:=&\max\left\{\max_{0\leq j\leq \myfloor{N/t}-1} \sqrt{\bu_{jt+1}^2+\cdots+\bu_{(j+1)t}^2}\right.,
 \left. \sqrt{\bu_{n_0t+1}^2+\cdots+\bu_N^2}\right\}.
\end{eqnarray*}
Then the following lemma presents some inequalities for the extension norm:
\begin{lem}\label{le:cauch}
Suppose that $\bu\in \C^N$, $\bv\in \C^N$ and $t\in \Z_+$. Then
\begin{enumerate}[{\rm (i)}]
\item $${\mathcal R}(\innerp{\bu,\bv})\leq \|\bu\|_{t,\infty}\cdot  \|\bv\|_{t,1},$$
where ${\mathcal R}(\cdot)$ denotes the real part;
\item  $$\|\bu\|_{t,1}^2\leq \left\lceil\frac{N}{t}\right\rceil\cdot \|\bu\|_2^2.$$
\end{enumerate}
\end{lem}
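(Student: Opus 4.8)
The plan is to reduce both inequalities to the ordinary Cauchy--Schwarz inequality by exploiting the common block structure underlying the two generalized norms. First I would fix notation: set $K:=\myceil{N/t}$ and partition $\{1,\ldots,N\}$ into the $K$ consecutive blocks $B_0,\ldots,B_{n_0}$, where $B_j=\{jt+1,\ldots,(j+1)t\}$ for $0\leq j\leq n_0-1$ and $B_{n_0}=\{n_0t+1,\ldots,N\}$ with $n_0=\myfloor{N/t}$. For any $\bw\in\C^N$ write $b_k(\bw):=\|\bw_{B_k}\|_2$ for the $\ell_2$-norm of the restriction of $\bw$ to the $k$-th block. Directly from the definitions one then has $\|\bw\|_{t,1}=\sum_{k} b_k(\bw)$ and $\|\bw\|_{t,\infty}=\max_k b_k(\bw)$; that is, the two generalized norms are exactly the $\ell_1$- and $\ell_\infty$-norms of the same vector of block norms $(b_0(\bw),\ldots,b_{n_0}(\bw))\in\R^{K}$.

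For part (i) I would split the inner product along the blocks, $\innerp{\bu,\bv}=\sum_{k}\innerp{\bu_{B_k},\bv_{B_k}}$, take real parts, and bound each term by the ordinary Cauchy--Schwarz inequality on $\C^{\#B_k}$, namely ${\mathcal R}(\innerp{\bu_{B_k},\bv_{B_k}})\leq\abs{\innerp{\bu_{B_k},\bv_{B_k}}}\leq b_k(\bu)\,b_k(\bv)$. Summing over $k$ and factoring out the largest block norm of $\bu$ gives
$$
{\mathcal R}(\innerp{\bu,\bv})\leq\sum_{k}b_k(\bu)\,b_k(\bv)\leq\Bigl(\max_k b_k(\bu)\Bigr)\sum_{k}b_k(\bv)=\|\bu\|_{t,\infty}\cdot\|\bv\|_{t,1},
$$
which is the claim; the last step is just the $\ell_\infty$--$\ell_1$ H\"older inequality applied to the block-norm vectors.

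For part (ii) I would apply Cauchy--Schwarz once more, this time to the $K$-dimensional block-norm vector $(b_0(\bu),\ldots,b_{n_0}(\bu))$ paired against the all-ones vector:
$$
\|\bu\|_{t,1}^2=\Bigl(\sum_{k}b_k(\bu)\cdot 1\Bigr)^2\leq\Bigl(\sum_{k}1\Bigr)\Bigl(\sum_{k}b_k(\bu)^2\Bigr)=K\cdot\|\bu\|_2^2,
$$
where I use that $\sum_{k}b_k(\bu)^2=\sum_{k}\|\bu_{B_k}\|_2^2=\|\bu\|_2^2$ because the blocks partition the coordinates. Since $K=\myceil{N/t}$, this is precisely the stated inequality.

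There is no genuine obstacle here: the whole content is the observation that $\|\cdot\|_{t,1}$ and $\|\cdot\|_{t,\infty}$ are the $\ell_1$ and $\ell_\infty$ norms of one and the same block-norm vector, after which both parts follow from Cauchy--Schwarz---applied \emph{within} each block for (i), and \emph{across} the $K$ blocks for (ii). The only point needing minor care is the possibly shorter final block $B_{n_0}$, but since the argument uses nothing beyond the facts that the blocks partition $\{1,\ldots,N\}$ and that there are exactly $K=\myceil{N/t}$ of them, the uneven last block causes no difficulty.
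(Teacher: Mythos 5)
Your proof is correct and is essentially identical to the paper's: both partition $\{1,\ldots,N\}$ into the $\myceil{N/t}$ consecutive blocks, prove (i) by blockwise Cauchy--Schwarz followed by the $\ell_\infty$--$\ell_1$ bound on the block-norm vector, and prove (ii) by Cauchy--Schwarz across the blocks against the all-ones vector. If anything, your write-up of (ii) is cleaner, since the paper's displayed chain for (ii) has the inequality sign pointing the wrong way (a typo --- Cauchy--Schwarz gives $\bigl(\sum_j \|\bu_{T_j}\|_2\bigr)^2 \leq \myceil{N/t}\sum_j \|\bu_{T_j}\|_2^2$, exactly as you state it).
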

\begin{proof}
To state conveniently, we set
$T_j:=\{j\cdot t,\ldots,j\cdot t+t\},\,  j=0,\ldots,n_0-1$ and $T_{n_0}:=\{n_0t+1,\ldots,N\}$, where
$n_0=\lfloor\frac{N}{t}\rfloor$.
 Then
\begin{eqnarray*}
{\mathcal R}(\innerp{\bu,\bv})&=&\sum_{j=0}^{n_0}{\mathcal R}(\innerp{\bu_{T_j},\bv_{T_j}})\\
&\leq& \sum_{j=0}^{n_0} \|\bu_{T_j}\|_2\cdot\|\bv_{T_j}\|_2\leq \|\bu\|_{t,\infty}\sum_{j=0}^{n_0}\|\bv_{T_j}\|_2\\
&\leq &\|\bu\|_{t,\infty}\cdot \|\bv\|_{t,1}.
\end{eqnarray*}
We now consider (ii). Note that
$$
\|\bu\|_2^2\,\,=\,\, \sum_{j=0}^{n_0}\|\bu_{T_j}\|_2^2.
$$
Then Cauchy-Schwarz inequality implies that
$$
\left\lceil\frac{N}{t}\right\rceil\cdot \|\bu\|_2^2 =\left\lceil\frac{N}{t}\right\rceil\cdot \sum_{j=0}^{n_0}\|\bu_{T_j}\|_2^2\,\,\leq \,\, \left(\sum_{j=0}^{n_0}\|\bu_{T_j}\|_2\right)^2\,\,=\,\,\|\bu\|_{t,1}^2.
$$
\end{proof}

\begin{lem}\label{th:1}
Suppose that $\Lambda^n\subset \Lambda^{n+1}\subseteq \{1,\ldots,N\}$
and set $T^n:=\Lambda^{n+1}\setminus \Lambda^n$ with $t:=\# T^n$.
 Suppose that the sampling matrix $A\in \C^{m\times N}$ satisfies  ${\rm Spark}(A)>\# \Lambda^{n+1}$.
Let
\begin{eqnarray}
\x^{n}&:=& \argmin{{\rm supp}(\mathbf z)\subset\Lambda^{n}}\|\y-A{\mathbf z}\|_2,\nonumber\\
 \x^{n+1}&:=& \argmin{{\rm supp}(\mathbf z)\subset \Lambda^{n+1}}\|\y-A{\mathbf z}\|_2,\label{eq:xndef}
\end{eqnarray}
and
$$
\V^n\,\,:=\,\,A_{T^n}^H(\y-A\x^n),
$$
where $A_{T^n}^H:=(A_{T^n})^H$.
Then
$$
\|\y-A\x^{n+1}\|_2^2\,\, \leq \,\, \|\y-A\x^n\|_2^2-\frac{1}{1+\delta_t}\|\V^n\|_2^2.
$$
\end{lem}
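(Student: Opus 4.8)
The plan is to exploit the optimality of the least-squares solution $\x^{n+1}$: since $\x^{n+1}$ minimizes $\|\y-A\mathbf z\|_2$ over all $\mathbf z$ with $\supp(\mathbf z)\subset\Lambda^{n+1}$, every admissible competitor furnishes an upper bound on $\|\y-A\x^{n+1}\|_2^2$. As $\supp(\x^n)\subset\Lambda^n$ and $T^n=\Lambda^{n+1}\setminus\Lambda^n$, any vector $\x^n+\mathbf w$ with $\supp(\mathbf w)\subset T^n$ is supported on $\Lambda^{n+1}$ and hence admissible. Writing $\mathbf r^n:=\y-A\x^n$, I would therefore begin from $\|\y-A\x^{n+1}\|_2^2\le\|\mathbf r^n-A\mathbf w\|_2^2$, valid for every such $\mathbf w$, and then select $\mathbf w$ so as to push the right-hand side as low as possible.

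First I would expand $\|\mathbf r^n-A\mathbf w\|_2^2=\|\mathbf r^n\|_2^2-2{\mathcal R}(\innerp{\mathbf r^n,A\mathbf w})+\|A\mathbf w\|_2^2$. Because $\mathbf w$ is supported on $T^n$, one has $A\mathbf w=A_{T^n}\mathbf w_{T^n}$, so the cross term rewrites as $\innerp{\mathbf r^n,A\mathbf w}=\innerp{A_{T^n}^H\mathbf r^n,\mathbf w_{T^n}}=\innerp{\V^n,\mathbf w_{T^n}}$ by the definition of $\V^n$, while the quadratic term obeys $\|A\mathbf w\|_2^2=\|A_{T^n}\mathbf w_{T^n}\|_2^2\le(1+\delta_t)\|\mathbf w_{T^n}\|_2^2$, since $\mathbf w_{T^n}$ is $t$-sparse and $\delta_t$ is the restricted isometry constant of order $t$. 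Collecting the terms turns the bound into a quadratic in the free vector $\mathbf c:=\mathbf w_{T^n}\in\C^t$, namely $\|\y-A\x^{n+1}\|_2^2\le\|\mathbf r^n\|_2^2-2{\mathcal R}(\innerp{\V^n,\mathbf c})+(1+\delta_t)\|\mathbf c\|_2^2$.

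The decisive step is then to minimize this quadratic over $\mathbf c$. The clean choice $\mathbf c=\V^n/(1+\delta_t)$ collapses the two $\V^n$-dependent terms into the single negative contribution $-\|\V^n\|_2^2/(1+\delta_t)$, which yields exactly $\|\y-A\x^{n+1}\|_2^2\le\|\y-A\x^n\|_2^2-\frac{1}{1+\delta_t}\|\V^n\|_2^2$, the desired assertion.

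I do not anticipate a genuine obstacle here: the argument is a single variational comparison rather than an induction, and the only delicate points are bookkeeping ones. I must invoke the hypothesis ${\rm Spark}(A)>\#\Lambda^{n+1}$ to guarantee that the columns indexed by $\Lambda^{n+1}$ are linearly independent, so that the minimizers $\x^n$ and $\x^{n+1}$ in (\ref{eq:xndef}) are well defined, and I must apply the restricted isometry upper bound at the exact sparsity level $t=\#T^n$, so that $(1+\delta_t)\|\mathbf w_{T^n}\|_2^2$ is a legitimate bound on $\|A\mathbf w\|_2^2$. Note that Lemma \ref{le:cauch} plays no role in this particular estimate; it will presumably enter later, when $\|\V^n\|_2$ is linked to the progress of $\OMMP$ toward recovering the true support.
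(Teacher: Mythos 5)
Your proof is correct, and it takes a genuinely different and more elementary route than the paper's. The paper first establishes the exact orthogonality decomposition $\|\y-A\x^n\|_2^2=\|\y-A\x^{n+1}\|_2^2+\|A(\x^{n+1}-\x^n)\|_2^2$, then computes $(\x^{n+1})_{T^n}=M_4\V^n$ explicitly via the block inverse of $A_{\Lambda^{n+1}}^HA_{\Lambda^{n+1}}$ (with $M_4$ the inverse Schur complement), and finally bounds $(\V^n)^HM_4\V^n\geq\frac{1}{1+\delta_t}\|\V^n\|_2^2$ by showing $\bu^HM_4^{-1}\bu\leq(1+\delta_t)\|\bu\|_2^2$; the Spark hypothesis is used there to certify that $M_4$ exists and is positive definite. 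You instead compare $\x^{n+1}$ against the single feasible competitor $\x^n+\mathbf w$ with $\mathbf w_{T^n}=\V^n/(1+\delta_t)$, which after expanding the square and applying the RIP upper bound at level $t$ immediately collapses to the stated inequality. Your argument is shorter, avoids any matrix-inversion formula, and in fact needs the Spark condition only to make the argmin in \eqref{eq:xndef} a well-defined vector (the minimal value, which is all the inequality uses, exists regardless). What the paper's heavier computation buys is the exact expression $(\V^n)^HM_4\V^n$ for the one-step decrease, which is in principle sharper than the lower bound $\frac{1}{1+\delta_t}\|\V^n\|_2^2$, but the lemma and its later applications use only the lower bound, so nothing is lost in your version. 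One small bookkeeping point: your identification of the cross term $\innerp{\mathbf r^n,A\mathbf w}$ with $\innerp{\V^n,\mathbf w_{T^n}}$ is harmless under either convention for the complex inner product because only the real part enters, and you are right that Lemma \ref{le:cauch} plays no role here.
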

\begin{proof}
The definition of $\x^{n+1}$ implies that the residuality $\y-A\x^{n+1}$ is orthogonal to the space ${\rm span}(A_{\Lambda^{n+1}})$. Noting
$A(\x^{n+1}-\x^n)\in {\rm span}(A_{\Lambda^{n+1}})$, we obtain that
$$
\left< \y-A\x^{n+1}, A(\x^{n+1}-\x^n)\right>=0,
$$
which implies that
\begin{eqnarray}
\|\y-A\x^n\|_2^2 &=&\|\y-A\x^{n+1}+A(\x^{n+1}-\x^n)\|_2^2\nonumber\\
&=&\|\y-A\x^{n+1}\|_2^2+\|A(\x^{n+1}-\x^n)\|_2^2.\label{eq:nn}
\end{eqnarray}
Furthermore,
$A^H_{\Lambda^{n+1}}(\y-A\x^n)=0$ implies that
\begin{equation}\label{eq:cc1}
(A^H\y)_{\Lambda^{n+1}}=(A^HA\x^{n+1})_{\Lambda^{n+1}}.
\end{equation}
Similarly, we have
\begin{equation}\label{eq:cc2}
(A^H\y)_{\Lambda^n}=(A^HA\x^n)_{\Lambda^n}.
\end{equation}
According to  (\ref{eq:cc1}),  we obtain that
\begin{equation}\label{eq:lamn1}
(A^HA(\x^{n+1}-\x^n))_{\Lambda^{n+1}}=(A^H(\y-A\x^n))_{\Lambda^{n+1}},
\end{equation}
since
\begin{eqnarray*}
(A^HA(\x^{n+1}-\x^n))_{\Lambda^{n+1}}
=(A^H\y)_{\Lambda^{n+1}}-(A^H A\x^n)_{\Lambda^{n+1}}
 =(A^H(\y-A\x^{n}))_{\Lambda^{n+1}}.
\end{eqnarray*}
To this end, we consider
\begin{eqnarray}
\|A(\x^{n+1}-\x^n)\|_2^2 &=&\left<\x^{n+1}-\x^n,A^HA(\x^{n+1}-\x^n)\right>\label{eq:Axn2}\\
&=&\left<(\x^{n+1}-\x^n)_{\Lambda^{n+1}}, (A^HA(\x^{n+1}-\x^n))_{\Lambda^{n+1}}\right>\nonumber\\
&=&\left<(\x^{n+1}-\x^n)_{\Lambda^{n+1}}, (A^H(\y-A\x^n))_{\Lambda^{n+1}}\right>\nonumber\\
&=&\left<(\x^{n+1}-\x^n)_{T^n}, (A^H(\y-A\x^n))_{T^n}\right>\nonumber\\
&=&\left<(\x^{n+1})_{T^n}, (A^H(\y-A\x^n))_{T^n}\right>\nonumber\\
&=&\left<(A^H(\y-A\x^n))_{T^n},(\x^{n+1})_{T^n}\right>\nonumber\\
&=&\left<\V^n,(\x^{n+1})_{T^n}\right>,\nonumber
\end{eqnarray}
where the third and the fourth equality follow from (\ref{eq:lamn1}) and (\ref{eq:cc2}), respectively.
According to \eqref{eq:xndef},
$$
\x^{n+1}=A_{\Lambda^{n+1}}^+\y,
$$
where $A_{\Lambda^{n+1}}^+=(A_{\Lambda^{n+1}}^HA_{\Lambda^{n+1}})^{-1}A_{\Lambda^{n+1}}^H$ is the Moore-Penrose pseudoinverse of $A_{\Lambda^{n+1}}$. And hence
$$
\x^{n+1}=(A_{\Lambda^{n+1}}^HA_{\Lambda^{n+1}})^{-1}A_{\Lambda^{n+1}}^H \y.
$$
We can write $A_{\Lambda^{n+1}}$ as $A_{\Lambda^{n+1}}=\left[ A_{\Lambda^n}, A_{T^n}\right]$. Then
$$
A_{\Lambda^{n+1}}^HA_{\Lambda^{n+1}}=\left[
\begin{array}{cc}
A_{\Lambda^n}^HA_{\Lambda^n} & A_{\Lambda^n}^HA_{T^n} \\
A_{T^n}^HA_{\Lambda^n} & A^H_{T^n}A_{T^n}
\end{array}
\right].
$$
We next consider
\begin{equation}\label{eq:m1234}
(A_{\Lambda^{n+1}}^HA_{\Lambda^{n+1}})^{-1}=\left[
\begin{array}{cc}
M_1 & M_2 \\
M_3 & M_4
\end{array}
\right],
\end{equation}
where
\begin{eqnarray*}
M_4 &=& (A^H_{T^n}A_{T^n}-A_{T^n}^HA_{\Lambda^n}A_{\Lambda^n}^+A_{T^n})^{-1},\\
M_3 &=& -M_4 (A_{T^n}^HA_{\Lambda^n})(A_{\Lambda^n}^{H}A_{\Lambda^n})^{-1}.
\end{eqnarray*}
Noting (\ref{eq:m1234}) and that
$$
A_{\Lambda^{n+1}}^H\y=\left[
  \begin{array}{c}
   A_{\Lambda^n}^H\y \\
   A_{T^n}^H\y
  \end{array}
\right],
$$
we obtain that
\begin{eqnarray}
(\x^{n+1})_{T^n}&=& \left((A_{\Lambda^{n+1}}^HA_{\Lambda^{n+1}})^{-1}A_{\Lambda^{n+1}}^H \y\right)_{T^n}\nonumber\\
&=& M_3A_{\Lambda^n}^H\y+M_4A_{T^n}^H\y \nonumber\\
&=& -M_4 (A_{T^n}^HA_{\Lambda^n}) (A_{\Lambda^n}^HA_{\Lambda^n})^{-1}A_{\Lambda^n}^H\y+M_4A_{T^n}^H\y\nonumber\\
&=& M_4A_{T^n}^H(-A_{\Lambda^n}(A_{\Lambda^n}A_{\Lambda^n})^{-1}A_{\Lambda^n}^H\y+\y)\nonumber\\
&=& M_4A_{T^n}^H\left(-A_{\Lambda^n}A_{\Lambda^n}^+\y+\y \right)\nonumber\\
&=& M_4A_{T^n}^H \left(\y-A\x^n\right)\nonumber\\
&=& M_4\V^n.\label{eq:xm4vn}
\end{eqnarray}
Combining (\ref{eq:Axn2}) and (\ref{eq:xm4vn}) we have
\begin{equation}\label{eq:vmv}
(\V^n)^HM_4\V^n=\left<\V^n,(\x^{n+1})_{T^n}\right>= \|A(\x^{n+1}-\x^n)\|_2^2.
\end{equation}
To this end, we consider $\bu^HM_4^{-1}\bu$ for any $\bu\in \C^{t}$.  Note that
\begin{eqnarray}
\bu^HM_4^{-1}\bu&=& \bu^HA^H_{T^n}A_{T^n}\bu-\bu^HA_{T^n}^HA_{\Lambda^n}A_{\Lambda^n}^+A_{T^n}\bu\nonumber\\
&=&\bu^HA^H_{T^n}A_{T^n}\bu-\left<A_{T^n}\bu, P_{A_{\Lambda^n}} (A_{T^n}\bu)\right>\nonumber\\
&=& \|A_{T^n}\bu\|_2^2-\|P_{A_{\Lambda^n}}(A_{T^n}\bu)\|_2^2\label{eq:um4u}\\
&\leq& \|A_{T^n}\bu\|_2^2\nonumber\\
&\leq & (1+\delta_t)\|\bu\|_2^2,\label{eq:M4inv}
\end{eqnarray}
where $P_{A_{\Lambda^n}}(A_{T^n}\bu)$ denotes the orthogonal projection of $A_{T^n}\bu$ in the subspace ${\rm span}(A_{\Lambda^n})$.
The last inequality follows from the RIP property of $A$.
Since ${\rm Spark}(A)>\#\Lambda^{n+1}$, we have $A_{T^n}\bu\notin {\rm span}(A_{\Lambda^n})$ which implies that
 $\|P_{A_{\Lambda^n}}(A_{T^n}\bu)\|_2^2<\|A_{T^n}\bu\|_2^2$ provided $\bu\neq 0$. And hence, accoridng to (\ref{eq:um4u}),
$$
 \bu^HM_4^{-1}\bu=\|A_{T^n}\bu\|_2^2-\|P_{A_{\Lambda^n}}(A_{T^n}\bu)\|_2^2,
$$
which implies that $M_4$ is a positive-definite matrix since $\bu^HM_4^{-1}\bu>0$ provided $\bu \neq 0$. Combining \eqref{eq:vmv} and \eqref{eq:M4inv}, we obtain that
$$
 \|A(\x^{n+1}-\x^n)\|_2^2\,=\,(\V^n)^HM_4\V^n\, \geq \, \frac{1}{1+\delta_t}\|\V^n\|_2^2.
$$
Then the \eqref{eq:nn} implies that
\begin{eqnarray*}
\|\y-A\x^{n+1}\|_2^2 &=&\|\y-A\x^{n}\|_2^2-\|A(\x^{n+1}-\x^n)\|_2^2\\
&\leq & \|\y-A\x^{n}\|_2^2-\frac{1}{1+\delta_t}\|\V^n\|_2^2.
\end{eqnarray*}
\end{proof}

\begin{lem}\label{le:1}
Consider $\OMMP(M)$ and $\Lambda^n\subset \Lambda^{n+1}\subset\{1,\ldots,N\}$.  Set $T^n:=\Lambda^{n+1}\setminus \Lambda^{n}$ and $t:=\# T^n$. Suppose that the sampling
matrix $A\in \C^{m\times N}$ whose columns $a_1,\ldots,a_N$ are $\ell_2$-normalized. Then
for any $\bu\in \C^N$ whose support $U:={\rm supp}(\bu)$ not included in $\Lambda^n$, we have
\begin{eqnarray*}
\|\V^n\|_2^2\,\,  \geq \,\, \frac{\|A(\bu-\x^n)\|_2^2\left(\|\y-A\x^n\|_2^2-\|\y-A\bu\|_2^2\right)}{\|\bu_{\overline{\Lambda^n}}\|_{t,1}^2},
\end{eqnarray*}
where $\V^n:=A_{T^n}^H(\y-A\x^n)$.
\end{lem}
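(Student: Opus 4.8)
The plan is to relate the energy decrease $\|\y-A\x^n\|_2^2-\|\y-A\bu\|_2^2$ to the selection vector $\V^n$ through the correlation $A^H\mathbf{r}^n$, where $\mathbf{r}^n:=\y-A\x^n$ is the current residual. First I would expand the gap by writing $\y-A\bu=\mathbf{r}^n-A(\bu-\x^n)$, which gives
\[
\|\y-A\x^n\|_2^2-\|\y-A\bu\|_2^2 \,=\, 2\,{\mathcal R}\langle \mathbf{r}^n, A(\bu-\x^n)\rangle - \|A(\bu-\x^n)\|_2^2 .
\]
It is convenient to abbreviate $P:={\mathcal R}\langle \mathbf{r}^n, A(\bu-\x^n)\rangle$, $D:=\|A(\bu-\x^n)\|_2^2$, $Q:=\|\V^n\|_2$ and $L:=\|\bu_{\overline{\Lambda^n}}\|_{t,1}$, so that the claimed inequality reads $Q^2L^2\ge D\,(2P-D)$. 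Note $L>0$, since $U=\supp(\bu)\not\subset\Lambda^n$ forces $\bu_{\overline{\Lambda^n}}\neq 0$, so dividing by $L^2$ at the end is legitimate.

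Next I would rewrite $P$ so that only the coordinates outside $\Lambda^n$ survive. Since $\x^n$ is the least-squares solution supported on $\Lambda^n$, the residual is orthogonal to ${\rm span}(A_{\Lambda^n})$, i.e. $(A^H\mathbf{r}^n)_{\Lambda^n}=0$ (the same normal-equation argument used in the proof of Lemma \ref{th:1}). Moving $A$ across the inner product and using that $(\bu-\x^n)$ agrees with $\bu$ off $\Lambda^n$ while $A^H\mathbf{r}^n$ vanishes on $\Lambda^n$, I obtain $P={\mathcal R}\langle A^H\mathbf{r}^n, \bu_{\overline{\Lambda^n}}\rangle$. Then Lemma \ref{le:cauch}(i), applied with block size $t$, yields $P\le \|A^H\mathbf{r}^n\|_{t,\infty}\,\|\bu_{\overline{\Lambda^n}}\|_{t,1}=\|A^H\mathbf{r}^n\|_{t,\infty}\,L$.

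The crucial step — and the one I expect to be the main obstacle — is replacing $\|A^H\mathbf{r}^n\|_{t,\infty}$ by $Q=\|\V^n\|_2$. Here I would invoke the defining rule of $\OMMP$: the set $T^n=\supp(\V^n)$ consists of the $t$ indices of largest magnitude of $A^H\mathbf{r}^n$, so for every fixed block $B$ of size $t$ one has $\sum_{i\in B}|(A^H\mathbf{r}^n)_i|^2\le \sum_{i\in T^n}|(A^H\mathbf{r}^n)_i|^2=Q^2$. Taking the maximum over blocks gives $\|A^H\mathbf{r}^n\|_{t,\infty}\le Q$, hence $P\le QL$. The subtlety is exactly that the generalized $\ell_\infty$-norm in Lemma \ref{le:cauch} is built from \emph{fixed} blocks, whereas $\V^n$ collects the \emph{globally} largest entries; it is the greedy selection rule that makes the comparison run in the favorable direction.

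Finally I would close with elementary algebra. If $P\le 0$, then $D(2P-D)\le 0\le Q^2L^2$ and the bound is immediate. If $P>0$, then $P\le QL$ gives $P^2\le Q^2L^2$, while completing the square shows $D(2P-D)=P^2-(P-D)^2\le P^2$. Chaining these, $D(2P-D)\le P^2\le Q^2L^2$, and dividing by $L^2>0$ recovers precisely the stated inequality.
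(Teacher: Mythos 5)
Your proposal is correct and follows essentially the same route as the paper: both reduce the claim to the key bound ${\mathcal R}\langle \y-A\x^n, A(\bu-\x^n)\rangle \le \|\V^n\|_2\,\|\bu_{\overline{\Lambda^n}}\|_{t,1}$ via Lemma \ref{le:cauch}(i), the vanishing of $(A^H(\y-A\x^n))_{\Lambda^n}$, and the greedy selection of $T^n$, and then close with the same AM--GM inequality (your completing-the-square identity $D(2P-D)=P^2-(P-D)^2$ is just the paper's $\frac12(a^2+b^2)\ge ab$ step in disguise, with your case split on the sign of $P$ replacing the paper's split on the sign of the energy gap).
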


\begin{proof}
To this end, we only need prove that
\begin{eqnarray*}
\|\V^n\|_2^2\cdot  \|\bu_{\overline{\Lambda^n}}\|_{t,1}^2
 \geq  \|A(\bu-\x^n)\|_2^2\cdot \left(\|\y-A\x^n\|_2^2-\|\y-A\bu\|_2^2\right).
\end{eqnarray*}
When
$$
\|\y-A\x^n\|_2^2-\|\y-A\bu\|_2^2<0,
$$
the conclusion holds. So, we only consider the case where
$$
\|\y-A\x^n\|_2^2-\|\y-A\bu\|_2^2\geq 0.
$$
Recall that $T^n$ is the $t$ indices corresponding to the largest  magnitude entries in the vector  $(A^H(\y-A\x^n))_{\overline{\Lambda^n}}$. Then
$$
\|\V^n\|_2\,\,  \geq\,\,  \|(A^H(\y-A\x^n))_{\overline{\Lambda^n}}\|_{t,\infty}.
$$
Noting that $(\x^n)_{\overline{\Lambda^n}}=0$ and $(A^H(\y-A\x^n))_{{\Lambda^n}}=0$, we have
\begin{eqnarray*}
\|\V^n\|_2\cdot  \|\bu_{\overline{\Lambda^n}}\|_{t,1}
  &\geq& \|(A^H(\y-A\x^n))_{\overline{\Lambda^n}}\|_{t,\infty}\cdot  \|(\bu-\x^n)_{\overline{\Lambda^n}}\|_{t,1} \\
&\geq & {\mathcal R} \left(\left<(\bu-\x^n)_{\overline{\Lambda^n}}, (A^H(\y-A\x^n))_{\overline{\Lambda^n}}\right>\right)\\
&=&{\mathcal R} \left(\left<(\bu-\x^n), A^H(\y-A\x^n)\right>\right)\\
&=&{\mathcal R} \left(\left<A(\bu-\x^n), \y-A\x^n\right>\right) \\
&=& \frac{1}{2}\left(\|A(\bu-\x^n)\|_2^2+\|\y-A\x^n\|_2^2\right.
\left.-\:\|A(\bu-\x^n)-(\y-A\x^n)\|_2^2\right) \\
&=&\frac{1}{2} \left(\|A(\bu-\x^n)\|_2^2+\|\y-A\x^n\|_2^2-\|\y-A\bu\|_2^2\right) \\
&\geq &  \|A(\bu-\x^n)\|_2 \cdot \sqrt{ {\|\y-A\x^n\|_2^2-\|\y-A\bu\|_2^2}},
\end{eqnarray*}
which implies the result, where the second inequality follows from Lemma \ref{le:cauch}.
\end{proof}

\begin{lem}\label{th:ite}
Under the conditions of Lemma \ref{le:1}, we have
\begin{equation}
\|\y-A\x^{n+1}\|_2^2
\leq
 \|\y-A\x^n\|_2^2-
  \frac{(1-\delta)}{(1+\delta_t)\left\lceil  \frac{\# (U\setminus \Lambda^n)}{t} \right\rceil}\max\{0,
\|\y-A\x^n\|_2^2-\|\y-A\bu\|_2^2 \},\label{eq:ite}
\end{equation}
where $\delta=\delta_{\#(U\cup \Lambda^n)}$.
\end{lem}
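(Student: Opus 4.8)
The plan is to chain the three preceding lemmas together and then estimate the two ``geometric'' quantities $\|A(\bu-\x^n)\|_2^2$ and $\|\bu_{\overline{\Lambda^n}}\|_{t,1}^2$ appearing in Lemma \ref{le:1} by means of RIP and Lemma \ref{le:cauch}(ii), respectively. First I would split into two cases according to the sign of $\|\y-A\x^n\|_2^2-\|\y-A\bu\|_2^2$. When this quantity is negative, the $\max$ on the right-hand side of \eqref{eq:ite} is $0$, so the claim reduces to $\|\y-A\x^{n+1}\|_2^2\leq\|\y-A\x^n\|_2^2$, which is immediate since $\Lambda^n\subset\Lambda^{n+1}$ forces the least-squares residual over the larger index set to be no larger (equivalently, the term subtracted in Lemma \ref{th:1} is nonnegative). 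Thus the whole content lies in the case $\|\y-A\x^n\|_2^2-\|\y-A\bu\|_2^2\geq 0$, where the $\max$ equals the difference itself.

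In that case I would start from Lemma \ref{th:1}, which gives $\|\y-A\x^{n+1}\|_2^2\leq\|\y-A\x^n\|_2^2-\frac{1}{1+\delta_t}\|\V^n\|_2^2$, and insert the lower bound on $\|\V^n\|_2^2$ from Lemma \ref{le:1}. It then remains to show that $\|A(\bu-\x^n)\|_2^2/\|\bu_{\overline{\Lambda^n}}\|_{t,1}^2\geq (1-\delta)/\lceil \#(U\setminus\Lambda^n)/t\rceil$. For the numerator, since $\bu-\x^n$ is supported on $U\cup\Lambda^n$, the RIP of order $\#(U\cup\Lambda^n)$ yields $\|A(\bu-\x^n)\|_2^2\geq(1-\delta)\|\bu-\x^n\|_2^2$ with $\delta=\delta_{\#(U\cup\Lambda^n)}$. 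For the denominator, I would apply Lemma \ref{le:cauch}(ii) to the vector $\bu_{\overline{\Lambda^n}}$, whose support $U\setminus\Lambda^n$ has $\#(U\setminus\Lambda^n)$ elements, to get $\|\bu_{\overline{\Lambda^n}}\|_{t,1}^2\leq\lceil \#(U\setminus\Lambda^n)/t\rceil\,\|\bu_{\overline{\Lambda^n}}\|_2^2$, the ceiling being over the support size rather than $N$ because only the blocks meeting the support contribute.

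The glue between the two estimates is the fact that $\x^n$ vanishes off $\Lambda^n$, i.e.\ $(\x^n)_{\overline{\Lambda^n}}=0$; splitting the norm over $\Lambda^n$ and its complement gives $\|\bu-\x^n\|_2^2=\|(\bu-\x^n)_{\Lambda^n}\|_2^2+\|\bu_{\overline{\Lambda^n}}\|_2^2\geq\|\bu_{\overline{\Lambda^n}}\|_2^2$. Combining this with the two displays above cancels the factor $\|\bu_{\overline{\Lambda^n}}\|_2^2$ and produces exactly the ratio $(1-\delta)/\lceil \#(U\setminus\Lambda^n)/t\rceil$; substituting back through Lemma \ref{le:1} and then Lemma \ref{th:1} yields \eqref{eq:ite}.

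I expect no deep obstacle here: the statement is essentially a bookkeeping assembly of Lemmas \ref{th:1}, \ref{le:1} and \ref{le:cauch}. The one point requiring care is tracking the correct RIP order, namely $\#(U\cup\Lambda^n)$ and hence the subscript of $\delta$, together with the correct ceiling argument $\lceil \#(U\setminus\Lambda^n)/t\rceil$ rather than $\lceil N/t\rceil$, since getting these sizes right is what makes the bound usable in the later iteration-counting arguments; the sign case-split must also be handled so that the $\max$ appears correctly.
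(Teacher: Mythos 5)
Your proposal is correct and follows essentially the same route as the paper: chain Lemma \ref{th:1} with Lemma \ref{le:1}, bound $\|\bu_{\overline{\Lambda^n}}\|_{t,1}^2$ via Lemma \ref{le:cauch}(ii) restricted to the support $U\setminus\Lambda^n$, and bound $\|A(\bu-\x^n)\|_2^2$ from below by $(1-\delta)\|\bu_{\overline{\Lambda^n}}\|_2^2$ using RIP of order $\#(U\cup\Lambda^n)$ and the fact that $\x^n$ vanishes off $\Lambda^n$. Your explicit sign case-split is handled only implicitly in the paper (the negative case follows from the first inequality of Lemma \ref{th:1} alone), but this is a presentational difference, not a different argument.
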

\begin{proof}
According to Lemma \ref{th:1} and Lemma \ref{le:1}, we have
\begin{eqnarray}
\|\y-A\x^{n+1}\|_2^2\,\,& \leq& \,\, \|\y-A\x^n\|^2-\frac{1}{1+\delta_t}\|\V_n\|_2^2\nonumber\\
&\leq & \|\y-A\x^n\|^2 -\frac{\|A(\bu-\x^n)\|_2^2\left(\|\y-A\x^n\|_2^2-\|\y-A\bu\|_2^2\right)}{(1+\delta_t)\|\bu_{\overline{\Lambda^n}}\|_{t,1}^2}.\label{eq:leeq1}
\end{eqnarray}
From Lemma \ref{le:cauch}, we have
\begin{equation}\label{eq:leeq2}
\|\bu_{\overline{\Lambda^n}}\|_{t,1}^2\,\,\leq\,\, \left\lceil  \frac{\# (U\setminus \Lambda^n)}{t} \right\rceil\cdot \|\bu_{\overline{\Lambda^n}}\|_2^2.
\end{equation}
Also,
\begin{eqnarray}
\|A(\bu-\x^n)\|_2^2 &\geq& (1-\delta)\|\bu-\x^n\|_2^2 \nonumber\\
&\geq& (1-\delta)\|(\bu-\x^n)_{\overline{\Lambda^n}}\|_2^2\nonumber\\
 &\geq& (1-\delta) \|\bu_{\overline{\Lambda^n}}\|_2^2.\label{eq:leeq3}
\end{eqnarray}
Putting (\ref{eq:leeq1}), (\ref{eq:leeq2}) and (\ref{eq:leeq3}) together, we arrive at the conclusion.
\end{proof}

\begin{remark}
 Lemma \ref{th:ite} extends some results in \cite{foucartomp}, where  Foucart considered the case with $t=\# (\Lambda^{n+1}\setminus \Lambda^n)=1$, to the general case.  In fact, if  takes $t=1$ in Lemma \ref{th:ite}, one can obtain Lemma 4 in \cite{foucartomp}.
\end{remark}

\section{Proof of Theorem \ref{th:ommpt0}}

\begin{proof}[Proof of Theorem \ref{th:ommpt0}]
To state conveniently, we set  $\x':=\x_{\overline{\Lambda^0}}$ and $\bar{K}:=\max\{s',\frac{8}{M}s'\}$.
We claim that the conclusion follows provided $S\subset \Lambda^{\bar K}$.  Indeed,  since
$$\#\Lambda^{\bar K}\leq \max\{Ms', 8s'\}+\#\Lambda^0 < {\rm Spark}(A),$$
one can recover $\x$ by solving the least square, i.e.,
 $$
 \x=\aarg{{\mathbf z}: {\rm supp}({\mathbf z})\subset \Lambda^{\bar{K}}}\|\y-A {\mathbf z}\|_2
 $$
Thus, to this end, we only need prove that $S\subset \Lambda^{\bar K}$, i.e.
$\# (S\setminus \Lambda^{\bar K})=0$.
 The proof is by induction  on $s'=\#(S\setminus \Lambda^0)$. If ${s'}=0$,
  then the conclusion holds. For the induction step, we assume that the result
 holds up to an integer $s'-1$. We next show that it holds for $s'$.

  Without loss of generality, we suppose that
$$
\abs{\x'_1}\geq \abs{\x'_2}\geq \cdots \geq \abs{\x'_{s'}} > 0.
$$
For $\ell=1,\ldots,\max\{0,\lceil\log_2\frac{s'}{M}\rceil\}+1$, we set
$$
\tilde{\x}^\ell_j:=
 \begin{cases}
\x'_j & \hbox{ if $j\geq 2^{\ell-1}\cdot M+1$, }\\
0& \hbox{else,}
\end{cases}
$$
and $\tilde{\x}^0:=\x'$.
 Suppose that $L\in \Z$ such that
\begin{equation}\label{eq:L11t0}
\|\tilde{\x}^0\|_2^2 < \mu \|\tilde{\x}^1\|_2^2,\ldots,\|\tilde{\x}^{L-2}\|_2^2 <\mu \|\tilde{\x}^{L-1}\|_2^2
\end{equation}
and
\begin{equation}\label{eq:L21t0}
\|\tilde{\x}^{L-1}\|_2^2 \geq \mu \|\tilde{\x}^L\|_2^2.
\end{equation}
And hence, $L$ is the least integer such that $\|\tilde{\x}^{L-1}\|_2^2 \geq \mu \|\tilde{\x}^L\|_2^2$ and we will choose $\mu>2$
  late. The existence of such a $L$ can follow from
$\|\tilde{\x}^\ell\|_2=0$ when $\ell=\max\{0,\lceil\log_2\frac{s'}{M}\rceil\}+1$.
 And hence, we have
 $$1 \leq L\leq \max\left\{0,\left\lceil\log_2\frac{s'}{M}\right\rceil\right\}+1.$$

We first consider the  case where $L=1$.
We take $\bu=\bu^1:=\x-\tilde{\x}^1$ and $t=M$ in (\ref{eq:ite}). Then a simple observation is that
$$
\# ({\rm supp}(\bu^1)\setminus \Lambda^0)=\min\{M, s'\}.
$$
Noting that $\lceil\frac{\# ({\rm supp}(\bu^1)\setminus \Lambda^0)}{M}\rceil=1$ and
$$
\|\y-A\bu^1\|_2^2=\|A\x-A\bu^1\|_2^2=\|A\tilde{\x}^1\|_2^2.
$$
By subtracting  $\|\y-A\bu^1\|_2^2=\|A\tilde{\x}^1\|_2^2$ on both sides of (\ref{eq:ite}), we can obtain that
\begin{eqnarray*}
\max\{0,\|\y-A\x^{1}\|_2^2-\|A\tilde{\x}^1\|_2^2\}
 \leq \left(1-\frac{1-\delta_{s }}{1+\delta_{s}}\right)\max\{0,\|y-A\x^0\|_2^2-\|A\tilde{\x}^1\|_2^2 \},
\end{eqnarray*}
which implies that
\begin{eqnarray}
{\|\y-A\x^1\|_2^2}
& \leq& \left(1-\frac{1-\delta_{s}}{1+\delta_s}\right)\max\{0, \|\y-A\x^0\|_2^2-\|A\tilde{\x}^1\|_2^2\}+\|A\tilde{\x}^1\|_2^2 \nonumber\\
&=&\left(1-\frac{1-\delta_{s}}{1+\delta_s}\right)\max\{0, \|A
\tilde{\x}^0\|_2^2-\|A\tilde{\x}^1\|_2^2\}+\|A\tilde{\x}^1\|_2^2 \nonumber\\
& \leq&  \left(1-\frac{1-\delta_{s}}{1+\delta_s}\right)\|A
\tilde{\x}^0\|_2^2+\|A\tilde{\x}^1\|_2^2 \nonumber\\
& \leq&  (1+\delta_s)\left(\left(1-\frac{1-\delta_{s}}{1+\delta_s}\right)\|
\tilde{\x}^0\|_2^2+\|\tilde{\x}^1\|_2^2\right) \nonumber\\
& \leq&  2\delta_s \|\tilde{\x}^0\|_2^2+\frac{1+\delta_s}{\mu}\|\tilde{\x}^0\|_2^2
 =  \left(2\delta_s +\frac{1+\delta_s}{\mu}\right)\|\tilde{\x}^0\|_2^2,\label{eq:com1}
\end{eqnarray}
where the last inequality uses the fact that $L=1$ and hence
$\|\tilde{\x}^1\|_2^2 \leq \|\tilde{\x}^0\|_2^2 /\mu$.
On the other hand, we note that
\begin{eqnarray}
\|\y-A\x^1\|_2^2 &=& \|A(\x-\x^1)\|_2^2\nonumber\\
& \geq& (1-\delta_{2s})\|\x-\x^1\|_2^2\nonumber\\
 &\geq& (1-\delta_{2s})\|\x_{\overline{\Lambda^1}}\|_2^2.\label{eq:com2}
\end{eqnarray}
Then, combining  (\ref{eq:com1}) and (\ref{eq:com2}),  we obtain that
\begin{eqnarray*}
\|\x_{\overline{\Lambda^1}}\|_2^2 \leq \frac{1}{1-\delta_{2s}}\left(2\delta_s +\frac{1+\delta_s}{\mu}\right)\|\tilde{\x}^0\|_2^2.
\end{eqnarray*}
Noting $\delta_s\leq \delta_{2s}\leq \delta_{9s}\leq  \frac{1}{10}$, we have
$$
\frac{1+\delta_s}{1-3\delta_{2s}}\leq 2 <\mu,
$$
which implies that
$$
 \frac{1}{1-\delta_{2s}}\left(2\delta_s +\frac{1+\delta_s}{\mu}\right)<1.
 $$
 And hence,
 $$
\|\x_{\overline{\Lambda^1}}\|_2^2\,\, <\,\, \|\tilde{\x}^0\|_2^2,
$$
i.e.
$$
 \#(S\setminus \Lambda^1)\,\, \leq\,\, s'-1.
 $$
 Now we continue the algorithm with the initial feature set $\Lambda^1$.
According to  the induction assumption, we can  recover the $s$-sparse signal $\x$ within
$ \max\{s'-1, \frac{8}{M}(s'-1)\}$ iterations provided the initial feature set is $\Lambda^1$.
Thus, if one chooses the initial feature set  as $\Lambda^0$ then $\x$ can be recovered within
 $1+\max\{s'-1, \frac{8}{M}(s'-1)\}$ iterations. Then, the conclusion follows since
$$
1+\max\left\{s'-1, \frac{8}{M}(s'-1)\right\} \leq \max\left\{s', \frac{8}{M}s'\right\}.
$$

We next consider the case where $L\geq 2$.
 We take $\bu=\bu^\ell:=\x-\tilde{\x}^\ell$ and $t=M$ in (\ref{eq:ite}). Then a simple observation is that
$$\#{\rm supp}(\bu^\ell)=\#({\rm supp}(\bu^\ell)\cap \Lambda^0)+\min\{ 2^{\ell-1}M,s'\}.$$
Thus, for any $n\geq 0$,
\begin{eqnarray*}
\#({\rm supp}(\bu^\ell)\setminus \Lambda^n)
&=& \#({\rm supp}(\bu^\ell)\cap \Lambda^0)+\min\{2^{\ell-1}M,s'\} -\#({\rm supp}(\bu^\ell)\cap \Lambda^n)\\
&\leq& \min\{2^{\ell-1}M, s'\}.
\end{eqnarray*}
To state conveniently, we set
$$
\bar{U}^\ell:=\left\lceil\frac{\min\{2^{\ell-1} M,s'\}}{M}\right\rceil\in \Z.
$$
If ${\rm supp}(\bu^\ell)\not\subset \Lambda^n$ then
  we obtain that
\begin{eqnarray}
& &\max\{0,\|\y-A\x^{n+1}\|_2^2-\|A\tilde{\x}^\ell\|_2^2\}\nonumber \\
 & &\leq
  \left(1-\frac{1-\delta_{s+nM }}{(1+\delta_{M})\cdot\bar{U}^\ell}\right)\max\{0,\|\y-A\x^n\|_2^2-\|A\tilde{\x}^\ell\|_2^2 \} \nonumber\\
& &\leq   \exp\left(-\frac{1-\delta_{s+nM }}{(1+\delta_{M})\cdot\bar{U}^\ell}\right)\max\{0,
 \|\y-A\x^n\|_2^2-\|A\tilde{\x}^\ell\|_2^2 \} ,\label{eq:itexpt0}
\end{eqnarray}
which follows by subtracting
$$\|y-A\bu^\ell\|_2^2=\|A\x-A\bu^\ell\|_2^2=\|A\tilde{\x}^\ell\|_2^2$$ on both sides of (\ref{eq:ite}) in Lemma \ref{th:ite}. For the case
${\rm supp}(\bu^\ell)\subset \Lambda^n$, (\ref{eq:itexpt0}) still holds since both sides of (\ref{eq:itexpt0}) are equal to $0$.
Iterating (\ref{eq:itexpt0}) $k$ times leads to
\begin{eqnarray}\label{eq:ite00}
& &\max\{0,\|\y-A\x^{n+k}\|_2^2-\|A\tilde{\x}^\ell\|_2^2\}\\\nonumber
& &\leq  \exp\left(-k\frac{1-\delta_{s+nM}}{(1+\delta_{M})\cdot\bar{U}^\ell}\right)\max\{0,
 \|\y-A\x^n\|_2^2-\|A\tilde{\x}^\ell\|_2^2 \}
\end{eqnarray}
which implies that
\begin{eqnarray}
 & &\|\y-A\x^{n+k}\|_2^2\nonumber\\
& &\leq  \exp\left(-k\frac{1-\delta_{s+nM}}{(1+\delta_{M})\cdot\bar{U}^\ell}\right)\max\{0,
 \|\y-A\x^n\|_2^2-\|A\tilde{\x}^\ell\|_2^2 \}+\|A\tilde{\x}^\ell\|_2^2 \nonumber\\
& &\leq  \exp\left(-k\frac{1-\delta_{s+nM}}{(1+\delta_{M})\cdot\bar{U}^\ell}\right)\|\y-A\x^n\|_2^2+\|A\tilde{\x}^\ell\|_2^2.\label{eq:itexp1t0}
\end{eqnarray}
 Here, if the left side of (\ref{eq:ite00}) is $0$, then
$$\|\y-A\x^{n+k}\|_2^2\leq \|A\tilde{\x}^\ell\|_2^2.$$ Thus, (\ref{eq:itexp1t0}) still holds since $\|\y-A\x^n\|_2^2\geq 0$.
To state conveniently, for $\ell=1\ldots,L$, we set $k_\ell:=\bar{k}\cdot\bar{U}^\ell$, $k_0:=0$, $K:=k_1+\cdots+k_L$ and $\nu:=\exp\left(-\bar{k}\frac{1-\delta_{s+KM}}{1+\delta_{M}}\right)$, and we will choose $\bar{k}$ late.
For $\ell=1,\ldots, L$, we take $n:=k_0+\cdots+k_{\ell-1}$ and $k:=k_\ell$ in  (\ref{eq:itexp1t0}) and arrive at
\begin{eqnarray}\label{eq:bude}
\|\y-A\x^{k_1+\cdots+k_\ell}\|_2^2&\leq&  \exp\left(-{\bar k}\frac{1-\delta_{s+(k_0+\cdots+k_{\ell-1})M}}{1+\delta_{M}}\right)\|\y-A\x^{k_1+\cdots+k_{\ell-1}}\|_2^2 +\|A\tilde{\x}^\ell\|_2^2 \nonumber\\
&\leq& \nu\|\y-A\x^{k_1+\cdots+k_{\ell-1}}\|_2^2 +\|A\tilde{\x}^\ell\|_2^2.
\end{eqnarray}
Then, using the inequality (\ref{eq:bude}) for $L$ times,  we can obtain that
\begin{eqnarray*}
\|\y-A\x^{K}\|_2^2
 & \leq& \nu^L\|\y-A\x^0\|_2^2+\nu^{L-1}\|A\tilde{\x}^{1}\|_2^2+\cdots+\nu\|A\tilde{\x}^{L-1}\|_2^2+\|A\tilde{\x}^L\|_2^2\\
& \leq& \nu^L\|A\tilde{\x}^0\|_2^2+\cdots+\nu\|A\tilde{\x}^{L-1}\|_2^2+\|A\tilde{\x}^L\|_2^2.
\end{eqnarray*}
Here, for the second relation, we use the fact of
\begin{eqnarray*}
\|\y-A\x^0\|_2^2 =\min_{{\rm supp}({\mathbf z})\subset \Lambda^0}\|\y-A{\mathbf z}\|_2^2
\leq \|\y-A(\x-\tilde{\x}^0)\|_2^2 =\|A\tilde{\x}^0\|_2^2
\end{eqnarray*}
with ${\rm supp}(\x-\tilde{\x}^0)\subset \Lambda^0$.
Combining RIP property of $A$, (\ref{eq:L11t0}) and (\ref{eq:L21t0}), we obtain that
$$
\|A\tilde{\x}^\ell\|_2^2 \leq (1+\delta_s)\|\tilde{\x}^\ell\|_2^2\leq (1+\delta_s)\mu^{L-1-\ell}\|\tilde{\x}^{L-1}\|_2^2
$$
 for $\ell=0,1,\ldots,L$.
Note that
\begin{eqnarray}
\|\y-A\x^K\|_2^2
&\leq& \sum_{\ell=0}^L \nu^{L-\ell}\|A\tilde{\x}^\ell\|_2^2 \nonumber \\
&\leq& \frac{(1+\delta_s)\|\tilde{\x}^{L-1}\|_2^2}{\mu}\sum_{\ell=0}^L (\mu \nu)^{L-\ell}\nonumber\\
&\leq & \frac{(1+\delta_s)\|\tilde{\x}^{L-1}\|_2^2}{\mu(1-\mu \nu)},\label{eq:leq1t0}
\end{eqnarray}
and
\begin{eqnarray} \label{eq:leq2t0}
\|\y-A\x^K\|_2^2 &\geq &\|A(\x-\x^K)\|_2^2\nonumber\\
 &\geq& (1-\delta_{s+K\cdot M})\|\x-\x^K\|_2^2\\ &\geq& (1-\delta_{s+K\cdot M})\|\x_{\overline{\Lambda^K}}\|_2^2.\nonumber
\end{eqnarray}
Combining (\ref{eq:leq1t0}) and (\ref{eq:leq2t0}), we have
\begin{equation}\label{eq:di1}
\|\x_{\overline{\Lambda^K}}\|_2^2\leq  \frac{(1+\delta_s)}{(1-\delta_{s+K\cdot M})\mu(1-\mu\nu)} \|\tilde{\x}^{L-1}\|_2^2.
\end{equation}
We can choose $\bar{k}=2$,  $\mu=\frac{1}{2\nu}$, and $\delta_{s+K\cdot M}\leq \delta_{9s}\leq \frac{1}{10}$ with
$$
K=k_1+\cdots+k_L\leq 2^L \bar{k} \leq 8\frac{s}{M}.
$$
Noting that  $\nu\leq \exp(-18/11)$ and $\mu=\frac{1}{2\nu}>2$, we have
\begin{equation}\label{eq:di2}
 \frac{(1+\delta_s)}{(1-\delta_{s+K\cdot M})\mu(1-\mu \nu)}\,\, <\,\,1.
\end{equation}
Combining (\ref{eq:di1}) and (\ref{eq:di2}), we obtain that
$$
\|\x_{\overline{\Lambda^K}}\|_2^2< \|\tilde{\x}^{L-1}\|_2^2.
$$
As a result, after $K$ iterations, we have
$$
\# (S\setminus \Lambda^K)\leq  {\rm supp}(\tilde{\x}^{L-1})-1 =
s'-2^{L-2}\cdot M-1,
$$
with
$$
K=k_1+\cdots+k_L\leq 2^L \bar{k} .
$$
Now we continue the algorithm with the initial feature set $\Lambda^K$.
According to  the induction assumption, we can  recover the $s$-sparse signal $\x$ within
$ {\bar n}$ iterations provided the initial feature set is $\Lambda^K$, where
$$
\bar{n}=
\max\{s'-2^{L-2}\cdot M-1,\frac{8}{M}(s'-2^{L-2}\cdot M-1)\}.
$$
Thus, if one chooses the initial feature set  as $\Lambda^0$ then $\x$ can be recovered within
 $K+\bar{n}$ iterations. Then, the conclusion follows since
 $K+\bar{n}\leq \max\{s', \frac{8}{M}s'\}$.

\end{proof}

\section{Proofs of Theorem \ref{th:ommpt1} and  Theorem \ref{th:ommpt2} }

To prove Theorem \ref{th:ommpt1} and  Theorem \ref{th:ommpt2}, we first introduce two lemmas.

\begin{lem}\label{le:slowdecay}
Consider the $\OMMP(M)$ algorithm with  $1\leq M\leq s$. Suppose that the sampling
matrix $A\in \C^{m\times N}$  satisfies $(9s, \frac{1}{10})$-RIP.
Suppose that $\x\in \Sigma_s$, $S={\rm supp}(\x)$.
 Then
 $$\# (S\setminus\Lambda^{\bar{K}})=0,$$
  where ${\bar K}:=\myfloor{8\frac{s'}{M}+8(C_0^2+1)M}$, $s':=\#(S\setminus \Lambda^0)$
 and $
C_0={\max\limits_{j\in S }\abs{\x_j}}/{\min\limits_{j\in S}\abs{\x_j}}.
$
\end{lem}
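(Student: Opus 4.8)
The plan is to re-use, essentially verbatim, the dyadic ``super-step'' produced in the proof of Theorem~\ref{th:ommpt0}, and to organise the whole run of $\OMMP(M)$ into an \emph{efficient phase} followed by a short \emph{endgame}, with the threshold between the two dictated by the dynamic range $C_0$. As in Theorem~\ref{th:ommpt0} I would write $\x':=\x_{\overline{\Lambda^0}}$, order its entries so that $\abs{\x'_1}\ge\cdots\ge\abs{\x'_{s'}}>0$, form the dyadic truncations $\tilde\x^\ell$ (which zero out the $2^{\ell-1}M$ largest coordinates), fix $\mu>2$ exactly as there, and take $L$ to be the least index with $\|\tilde\x^{L-1}\|_2^2\ge\mu\|\tilde\x^{L}\|_2^2$. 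Iterating the estimate \eqref{eq:ite} of Lemma~\ref{th:ite} over the levels, as in \eqref{eq:bude}, yields a block of $K\le 2^{L}\bar k$ iterations (with $\bar k=2$) after which $\|\x_{\overline{\Lambda^{K}}}\|_2^2<\|\tilde\x^{L-1}\|_2^2$, hence $\#(S\setminus\Lambda^{K})\le s'-2^{L-2}M-1$. None of this uses $C_0$; it is inherited directly.

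The one genuinely new observation is that $C_0$ forces $L\ge2$ whenever $s'$ is large. Indeed, by \eqref{eq:L11t0}--\eqref{eq:L21t0}, $L=1$ means $\|\tilde\x^0\|_2^2\ge\mu\|\tilde\x^1\|_2^2$, i.e.\ $\sum_{j\le M}\abs{\x'_j}^2\ge(\mu-1)\|\tilde\x^1\|_2^2$. Bounding the left-hand side above by $M\bigl(\max_{j\in S}\abs{\x_j}\bigr)^2$ and $\|\tilde\x^1\|_2^2$ below by $(s'-M)\bigl(\min_{j\in S}\abs{\x_j}\bigr)^2$ gives $MC_0^2\ge(\mu-1)(s'-M)$, so that $s'\le M\bigl(1+C_0^2/(\mu-1)\bigr)\le M(C_0^2+1)$, using $\mu-1\ge1$. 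Contrapositively, as long as the current number of uncaptured indices exceeds $M(C_0^2+1)$ we are guaranteed $L\ge2$, and then the super-step removes at least $2^{L-2}M+1$ indices at a cost of at most $2^{L}\bar k$ iterations, i.e.\ at most $8/M$ iterations per removed index.

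This gives the two-phase count. In the efficient phase I run super-steps as long as the uncaptured set has size $>M(C_0^2+1)$; each has $L\ge2$, so summing $\text{cost}\le 8\cdot(\text{indices removed})/M$ over these steps costs at most $8s'/M$ iterations in total and adjoins at most $8s'$ columns, keeping every RIP invocation at order $\le s+8s'\le 9s$, where $\delta_{9s}\le 1/10$ is available. Once the uncaptured set has size $s^\ast\le M(C_0^2+1)$, the endgame is finished by applying Theorem~\ref{th:ommpt0} to the remaining problem, in at most $\max\{s^\ast,8s^\ast/M\}\le 8(C_0^2+1)M$ further iterations (this bound holds in both cases $M\le8$ and $M>8$). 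Adding the two phases yields at most $8s'/M+8(C_0^2+1)M$ iterations, which is $\bar K$, after which $S\subset\Lambda^{\bar K}$.

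The main obstacle is not the descent estimate, which is imported from Theorem~\ref{th:ommpt0}, but making the split clean: one must (i) verify the threshold $s'\le M(C_0^2+1)$ with the same constants $\mu=\tfrac1{2\nu}$, $\nu\le e^{-18/11}$, $\bar k=2$, so that $\mu-1\ge1$ and the estimate \eqref{eq:di2} controlling the super-step still applies; and (ii) justify the endgame invocation of Theorem~\ref{th:ommpt0} with $\Lambda^{K}$ as its initial feature set, i.e.\ check that the ambient RIP and spark hypotheses transfer to the residual problem. The $C_0$-dependence of the final bound is entirely the price of this endgame, where up to $\sim M(C_0^2+1)$ low-energy coordinates may have to be cleared essentially one at a time.
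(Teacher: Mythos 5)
Your proposal is correct and follows essentially the same route as the paper: the paper's proof is an induction on $s'$ whose base case ($s'\leq (\frac{C_0^2}{\mu-1}+1)M$, handled by Theorem~\ref{th:ommpt0}) is exactly your endgame, and whose inductive step is one dyadic super-step preceded by the same computation showing that $s'>(\frac{C_0^2}{\mu-1}+1)M$ forces $L\geq 2$ via $\x_1'^2+\cdots+\x_M'^2\leq MC_0^2\min_j\abs{\x_j}^2<(\mu-1)(s'-M)\min_j\abs{\x_j}^2$. Your explicit two-phase unrolling and the slightly looser threshold $M(C_0^2+1)$ in place of the paper's $C_2M$ are only cosmetic differences, and the telescoping cost bound $K+\bar n\leq\myfloor{8s'/M+8C_2M}$ in the paper matches your per-index accounting of at most $8/M$ iterations per captured index.
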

\begin{proof}
To state conveniently, we set  $$\x':=\x_{\overline{\Lambda^0}}$$ and
$$C_2:=\frac{C_0^2}{\mu-1}+1.$$ We will choose $\mu>2$ late so that $C_2<C_0^2+1$.
To this end, we will prove that $\# (S\setminus\Lambda^{K_1})=0 $ with
$K_1=\myfloor{8\frac{s'}{M}+8C_2M}$, which implies the result.   The proof is by induction  on $s'=\#(S\setminus \Lambda^0)$. We first consider the case where $s'\leq C_2M$. According to Theorem \ref{th:ommpt0}, $\OMMP(M)$
  recover the $s$-sparse signal within $8C_2M<8(C_0^2+1)M$ iterations. Thus, we arrive at the result provided $s'\leq C_2M$.

We next consider the case where  $s'> C_2 M$.  Without loss of generality, we suppose that
$$
\abs{\x'_1}\geq \abs{\x'_2}\geq \cdots \geq \abs{\x'_{s'}} > 0.
$$
To state coneniently, for $\ell=1,\ldots,\lceil\log_2(\frac{s'}{M} )\rceil+1$, we set
$$
\tilde{\x}^\ell_j:=
 \begin{cases}
\x'_j & \hbox{ if $2^{\ell-1}{M}+1\leq j$, }\\
0& \hbox{else.}
\end{cases}
$$
and $\tilde{\x}^0:=\x'$.
 Suppose that $L\in \Z$ such that
\begin{equation}\label{eq:L1}
\|\tilde{\x}^0\|_2^2 < \mu \|\tilde{\x}^1\|_2^2,\ldots,\|\tilde{\x}^{L-2}\|_2^2 <\mu \|\tilde{\x}^{L-1}\|_2^2
\end{equation}
and
\begin{equation}\label{eq:L2}
\|\tilde{\x}^{L-1}\|_2^2 \geq \mu \|\tilde{\x}^L\|_2^2.
\end{equation}
And hence, $L$ is the least integer such that $\|\tilde{\x}^{L-1}\|_2^2 \geq \mu \|\tilde{\x}^L\|_2^2$.  The existence of such a $L$ can follow from
$\|\tilde{\x}^\ell\|_2=0$ when $\ell=\lceil\log_2\frac{s' }{M}\rceil+1$.
We next show that the assumption of $s'> C_2 M$ implies that
$\|\tilde{\x}^0\|_2^2 < \mu \|\tilde{\x}^1\|_2^2$ and hence $L\geq 2$.
Indeed, $\|\tilde{\x}^0\|_2^2 < \mu \|\tilde{\x}^1\|_2^2$ is equivelent to
\begin{equation}\label{eq:bx}
\x_1'^2+\cdots+\x_M'^2 < (\mu-1) \|\tilde{\x}^1\|_2^2.
\end{equation}
Hence, we only need argue (\ref{eq:bx}). Note that
\begin{eqnarray*}
\x_1'^2+\cdots+\x_M'^2 \leq M \max_{j\in S}\x_{j}^2 <(\mu-1)(s'-M)\min_{j\in S}\x_{j}^2\leq (\mu-1)\|\tilde{\x}^1\|_2^2,
\end{eqnarray*}
where the second relation uses the fact of
$$
s'>C_2M=\left(\frac{C_0^2}{\mu-1}+1\right)M.
$$
And hence, we have $2\leq L\leq \lceil\log_2\frac{s'}{M}\rceil+1$. We take $$\bu=\bu^\ell:=\x-\tilde{\x}^\ell$$
 and $t=M$ in (\ref{eq:ite}). Then a simple observation is that
$$
\#{\rm supp}(\bu^\ell)=\#\Lambda^0+\min\{ 2^{\ell-1}{M},s'\}.
$$
For any $n\geq 0$,
\begin{eqnarray*}
\# ({\rm supp}(\bu^\ell)\setminus \Lambda^n)
& =& \# ({\rm supp}(\bu^\ell)\cap \Lambda^0)+\min\{2^{\ell-1} {M},s'\} -\#({\rm supp}(\bu^\ell)\cap \Lambda^n)\\
&  \leq& \min\{ 2^{\ell-1} {M}, s'\}.
\end{eqnarray*}
To state conveniently, we set
$$
\bar{U}^\ell:=\left\lceil\frac{\min\{ 2^{\ell-1} {M},s'\}}{M}\right\rceil.
$$
 Noting that
 $$
 \|\y-A\bu^\ell\|_2^2=\|A\x-A\bu^\ell\|_2^2=\|A\tilde{\x}^\ell\|_2^2,
 $$
  by (\ref{eq:ite}),  we obtain that
\begin{eqnarray}
& &\max\{0,\|\y-A\x^{n+1}\|_2^2-\|A\tilde{\x}^\ell\|_2^2\}\nonumber \\
 & &\leq  \left(1-\frac{1-\delta_{s+nM}}{(1+\delta_{M})\cdot\bar{U}^\ell}\right)\max\{0,
 \|\y-A\x^n\|_2^2-\|A\tilde{\x}^\ell\|_2^2 \}\nonumber\\
& &\leq  \exp\left(-\frac{1-\delta_{s+nM}}{(1+\delta_{M})\cdot\bar{U}^\ell}\right)\max\{0,
\|\y-A\x^n\|_2^2-\|A\tilde{\x}^\ell\|_2^2 \}.\label{eq:itexp0}
\end{eqnarray}
Iterating (\ref{eq:itexp0}) for ${ k}$ times leads to
\begin{eqnarray*}
\max\{0,\|\y-A\x^{n+{ k}}\|_2^2-\|A\tilde{\x}^\ell\|_2^2\}
\leq  \exp\left(-k\frac{(1-\delta_{s+KM})}{(1+\delta_{M})\cdot\bar{U}^\ell}\right)\max\{0,\|\y-A\x^n\|_2^2-\|A\tilde{\x}^\ell\|_2^2 \},
\end{eqnarray*}
which implies that
\begin{eqnarray}
\|\y-A\x^{n+{ k}}\|_2^2
\leq  \exp\left(-k\frac{(1-\delta_{s+KM})}{(1+\delta_{M})\cdot\bar{U}^\ell}\right)\|\y-A\x^n\|_2^2+\|A\tilde{\x}^\ell\|_2^2\label{eq:itexp10}
\end{eqnarray}
where ${ k}$ and $K$ are  integers satisfying $K\geq n+{ k}$.

To state conveniently, for $\ell=1\ldots,L$, we set $k_\ell:=\bar{k}\cdot\bar{U}^\ell$, $K:=k_1+\cdots+k_L$ and $$v:=\exp\left(-\bar{k}\frac{1-\delta_{s+KM}}{1+\delta_{M}}\right),$$
 and we will choose $\bar{k}$ late.  We use (\ref{eq:itexp10})
 and a similar argument in the proof of Theorem \ref{th:ommpt0} to obtain that
\begin{eqnarray}
\|\y-A\x^K\|_2^2&\leq& \sum_{\ell=0}^L v^{L-\ell}\|A\tilde{\x}^\ell\|_2^2\nonumber\\
&\leq& \frac{(1+\delta_s)\|\tilde{\x}^{L-1}\|_2^2}{\mu}\sum_{\ell=0}^L (\mu v)^{L-\ell}\nonumber\\
& \leq & \frac{(1+\delta_s)\|\tilde{\x}^{L-1}\|_2^2}{\mu(1-\mu v)}.\label{eq:leq101}
\end{eqnarray}
Note that
\begin{eqnarray} \label{eq:leq201}
\|\y-A\x^K\|_2^2 &\geq& \|A(\x-\x^K)\|_2^2 \\
&\geq &(1-\delta_{s+KM})\|\x-\x^K\|_2^2 \nonumber\\
& \geq &(1-\delta_{s+KM})\|\x_{\overline{\Lambda^K}}\|_2^2.\nonumber
\end{eqnarray}
Combining (\ref{eq:leq101}) and (\ref{eq:leq201}), we arrive at
$$
\|\x_{\overline{\Lambda^K}}\|_2^2\leq  \frac{(1+\delta_s)}{(1-\delta_{s+KM})\mu(1-\mu v)} \|\tilde{\x}^{L-1}\|_2^2.
$$
We can choose $\bar{k}=2$,  $\mu=\frac{1}{2v}$, and
$$\delta_{s+KM}\leq \delta_{9s}\leq \frac{1}{10},$$
 and therefore  $v\leq \exp(-18/11)$ and
$$\mu =\frac{1}{2\nu}>2.$$
Here, we use $s+KM\leq s+4\bar{k}s'\leq 9s$ since
\begin{eqnarray*}
K&=&k_1+\cdots+k_L\,\leq\, \bar{k}(1+\cdots+2^{L-1})\\
&\leq& 2^L\bar{k}\,\leq\, 4\cdot \bar{k} \cdot \frac{s'}{M}.
\end{eqnarray*}
 Then
$$
\frac{(1+\delta_s)}{(1-\delta_{s+KM})\mu(1-\mu v)}<1,
$$
which implies that
$$
\|\x_{\overline{\Lambda^K}}\|_2^2< \|\tilde{\x}^{L-1}\|_2^2.
$$
As a result, after $K$ iterations, we have
$$
\# (S\setminus \Lambda^K)< \#((S\setminus \Lambda^0)\setminus {\rm supp}(\bu^{L-1})) =s'-2^{L-2}{M}.
$$
 Now we continue the algorithm with the inital feature set  $\Lambda^K$. According to the induction
assumption, we can  recover the $s$-sparse signal $\x$ in $K+\bar{n}$ iterations where
$$\bar{n}\leq \myfloor{8\frac{s'-2^{L-2} {M}}{M}+8C_2 {M}}.$$
Note that $L\geq 2$ and
\begin{eqnarray*}
K+\bar{n}&\leq& 2^L \bar{k}+\myfloor{8\frac{s'-2^{L-2}M}{M}+8C_2M}\\
&=& 8\cdot 2^{L-2}+\myfloor{8\frac{s'}{M}+8C_2M}-8\cdot2^{L-2}\\
&=&\myfloor{8\frac{s'}{M}+8C_2M}.
\end{eqnarray*}
Then we arrive at
\begin{eqnarray*}
K+\bar{n}\leq \myfloor{8\frac{s'}{M}+8C_2M},
\end{eqnarray*}
which implies the result.
\end{proof}

\begin{lem}\label{le:smalls}
Suppose that $\x$ is $s$-sparse, $S={\rm supp}(\x)$ and
$
C_0={\max\limits_{j\in S}\abs{\x_j}}/{\min\limits_{j\in S}\abs{\x_j}}.
$
Consider the $\OMMP(M)$ algorithm with  $1\leq M\leq \frac{2}{C_0^2+2}\cdot s$. Suppose that the sampling
matrix $A\in \C^{m\times N}$ whose columns $a_1,\ldots,a_N$ are $\ell_2$-normalized, and
that $A$ satisfies $(14s,\frac{1}{10})$-RIP.  Set $s':=\#(S\setminus \Lambda^0)$ and
$$
{\bar K}:=\myceil{8\frac{s'}{M}+4\cdot \ln 2\cdot \frac{s}{M}\log_2(s'+1)}.
$$
Then
$$
\# (S\setminus\Lambda^{\bar{K}})=0.
$$
\end{lem}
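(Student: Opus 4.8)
The plan is to keep the entire contraction apparatus of Lemma~\ref{le:slowdecay} and Theorem~\ref{th:ommpt0} and to change only the \emph{length} of the run: rather than peeling off one dyadic block and recursing, I would contract the residual across the whole dynamic range of $\x$ in a single pass, which trades the additive overhead $8(C_0^2+1)M$ of Lemma~\ref{le:slowdecay}---useless once $M$ is comparable to $s$---for the logarithmic term $4\ln2\,\tfrac{s}{M}\log_2(s'+1)$.

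I would retain the setup $\x':=\x_{\overline{\Lambda^0}}$, $|\x'_1|\ge\cdots\ge|\x'_{s'}|>0$, the dyadic tails $\tilde\x^\ell$ (coordinates of index $>2^{\ell-1}M$, with $\tilde\x^0:=\x'$) and the drop level $L$, and apply Lemma~\ref{th:ite} with $t=M$ and $\bu^\ell:=\x-\tilde\x^\ell$ to obtain the one-step bound
\begin{equation*}
\max\{0,\|\y-A\x^{n+1}\|_2^2-\|A\tilde\x^\ell\|_2^2\}\le\exp\!\Big(-\tfrac{1-\delta_{s+nM}}{(1+\delta_M)\,\bar U^\ell}\Big)\max\{0,\|\y-A\x^n\|_2^2-\|A\tilde\x^\ell\|_2^2\},
\end{equation*}
where $\bar U^\ell:=\lceil\min\{2^{\ell-1}M,s'\}/M\rceil$. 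Telescoping the resulting geometric series over the levels (with $\nu:=\exp(-\bar k(1-\delta_{s+KM})/(1+\delta_M))$ and $\mu,\bar k$ calibrated exactly as in Lemma~\ref{le:slowdecay}, so $\mu>2$ and $\mu\nu<1$) gives $\|\y-A\x^K\|_2^2\le\tfrac{1+\delta_s}{\mu(1-\mu\nu)}\|\tilde\x^{L-1}\|_2^2$; and, rather than halting at the first drop level $L$, continuing to allot blocks through $L_{\max}=\lceil\log_2(s'/M)\rceil+1$ pushes this below $(1-\delta)\min_{j\in S}|\x_j|^2$. Since $\|\y-A\x^K\|_2^2\ge(1-\delta)\,\#(S\setminus\Lambda^K)\,\min_{j\in S}|\x_j|^2$ by RIP, this forces $\#(S\setminus\Lambda^K)=0$. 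The hypothesis $M\le\tfrac{2}{C_0^2+2}s$, i.e.\ $C_0^2\le 2(s-M)/M$, caps the dynamic range $\|A\tilde\x^0\|_2^2/\min_j|\x_j|^2\lesssim s'C_0^2$, so that the $O(\log_2(s'+1))$ scales assemble into the stated count $8\tfrac{s'}{M}+4\ln2\,\tfrac{s}{M}\log_2(s'+1)$.

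The crux---and the reason the RIP order rises from $9s$ to $14s$---is the control of the index $\#(U\cup\Lambda^n)$ carried by the constant $\delta$ in Lemma~\ref{th:ite}, which enters through $\|A(\bu^\ell-\x^n)\|_2^2\ge(1-\delta)\|\bu^\ell-\x^n\|_2^2$. Because each iteration appends exactly $M$ fresh coordinates, over a run of length $K=\Theta(\tfrac{s}{M}\log_2 s')$ the set $\Lambda^n$ would naively swell past $14s$; the argument must instead show that, once the large coordinates of $\x'$ have been absorbed, OMMP stops acquiring spurious atoms, so that $\#(S\cup\Lambda^n)$---and hence the relevant RIP index---stays within $14s$ throughout, while the Spark hypothesis $\mathrm{Spark}(A)>8s\log_2(2(s+1))$ still guarantees that the final least-squares step is well posed. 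Proving this uniform bound on the number of spurious indices---which is exactly where $C_0^2\le 2(s-M)/M$ is used, by forcing the misassignable residual energy below the gap $\min_j|\x_j|^2$---is the one genuinely new estimate. Granting it, every remaining step is a direct transcription of Lemmas~\ref{th:ite}, \ref{le:cauch} and~\ref{le:slowdecay}, and summing the level costs yields $\bar K$.
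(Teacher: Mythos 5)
Your proposal keeps the unscaled dyadic blocks of Lemma \ref{le:slowdecay} and tries to avoid the induction by a ``single pass''; the paper's proof does neither, and both substitutions break. The device that actually makes Lemma \ref{le:smalls} work is a \emph{rescaling} of the blocks: the paper defines $\tilde{\x}^\ell$ as the tail of $\x'$ beyond index $2^{\ell-1}\frac{M}{s}s'$, not beyond $2^{\ell-1}M$. This makes the first block contain only about $\frac{M}{s}s'$ entries, so the hypothesis $M\leq\frac{2}{C_0^2+2}s$ forces $\|\tilde{\x}^0\|_2^2<\mu\|\tilde{\x}^1\|_2^2$ and hence $L\geq 2$ for \emph{every} $s'$ --- this is exactly what eliminates the $L=1$ fallback that costs the additive $8(C_0^2+1)M$ in Lemma \ref{le:slowdecay}, which you correctly identify as the obstacle but do not actually remove. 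With the rescaled blocks one round costs $K\leq 2^L\bar{k}\frac{s'}{s}+\bar{k}L\leq 8\frac{s'}{M}+4+2\log_2\frac{s}{M}$ iterations (whence $s+KM\leq 13s$, which is the entire reason the RIP order is $14s$) and removes at least $2^{L-2}\frac{M}{s}s'+1$ elements of $S\setminus\Lambda^0$; the stated count $\bar{K}$ then comes from the induction on $s'$ via the bookkeeping $\bar{k}L+4\ln 2\cdot\frac{s}{M}\log_2\bigl(s'-2^{L-2}\frac{M}{s}s'\bigr)\leq 4\ln 2\cdot\frac{s}{M}\log_2(s'+1)$. With your blocks of size $2^{\ell-1}M$ and $M$ comparable to $s$, the $\log_2(s'+1)$ factor has no source.

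Two further concrete failures. First, the telescoped bound $\|\y-A\x^K\|_2^2\leq\frac{(1+\delta_s)}{\mu(1-\mu\nu)}\|\tilde{\x}^{L-1}\|_2^2$ is anchored at the \emph{first} drop level $L$ because it needs $\|A\tilde{\x}^\ell\|_2^2\leq(1+\delta_s)\mu^{L-1-\ell}\|\tilde{\x}^{L-1}\|_2^2$ for all $\ell\leq L$, which is supplied by the chain $\|\tilde{\x}^{\ell}\|_2^2<\mu\|\tilde{\x}^{\ell+1}\|_2^2$, $\ell\leq L-2$. Continuing ``through $L_{\max}$'' would require the same chain up to $L_{\max}$, which an early drop destroys; so the residual is not pushed below $(1-\delta)\min_{j\in S}\abs{\x_j}^2$ in one pass, and you cannot dispense with the restart-and-recurse structure. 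Second, the estimate you single out as the crux --- that OMMP eventually ``stops acquiring spurious atoms'' so that $\#(S\cup\Lambda^n)$ stays within $14s$ --- is neither proved in your proposal nor present in the paper, and is not what controls the RIP index: every iteration adds $M$ atoms regardless, and $\delta_{s+nM}$ is kept at order $14s$ purely by bounding the length $K$ of a single round before recursing. Granting an unproved claim that is not needed, while omitting the block rescaling that is, leaves a genuine gap.
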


\begin{proof}
To state conveniently, we set  $\x':=\x_{\overline{\Lambda^0}}$.
 The proof is by induction  on $s'=\#(S\setminus \Lambda^0)$. When $s'=0$, the conclusion holds trivially.

 Without loss of generality, we suppose that
$$
\abs{\x'_1}\geq \abs{\x'_2}\geq \cdots \geq \abs{\x'_{s'}} > 0.
$$
For convenience, for $\ell=1,\ldots,\lceil\log_2(\frac{s}{M} )\rceil+1$, we set
$$
\tilde{\x}^\ell_j:=
 \begin{cases}
\x'_j & \hbox{ if $2^{\ell-1}\frac{M}{s}s'+1\leq j$, }\\
0& \hbox{else}
\end{cases}
$$
and $\tilde{\x}^0:=\x'$.
 Similar with the proof of Lemma \ref{le:slowdecay}, suppose that  $L$ is the least integer
  such that $\|\tilde{\x}^{L-1}\|_2^2 \geq \mu \|\tilde{\x}^L\|_2^2$.  We will choose $\mu>2$ late. The assumption of
  $$M<\frac{2}{C_0^2+2}s$$
  implies that
$$
\|\tilde{\x}^0\|_2^2 < \mu \|\tilde{\x}^1\|_2^2.
$$
 And hence, we have $2\leq L\leq \lceil\log_2\frac{s}{M}\rceil+1$. We take
 $$
 \bu=\bu^\ell:=\x-\tilde{\x}^\ell
 $$
 and $t=M$ in (\ref{eq:ite}). Then a simple observation is that
$$
\#{\rm supp}(\bu^\ell)=\#{\rm supp}((\bu^\ell)\cap \Lambda^0)+\min\left\{\left\lfloor 2^{\ell-1}\frac{M}{s}s'\right\rfloor,s'\right\}.
$$
For any $n\geq 0$,
\begin{eqnarray*}
\#({\rm supp}(\bu^\ell)\setminus \Lambda^n)&=&\#{\rm supp}((\bu^\ell)\cap \Lambda^0) +
\min\left\{\left\lfloor 2^{\ell-1}\frac{M}{s}s'\right\rfloor,s'\right\}
-\#{\rm supp}((\bu^\ell)\cap \Lambda^n)\\
& \leq& \min\left\{ \left\lfloor 2^{\ell-1}\frac{M}{s}s'\right\rfloor,s'\right\}.
\end{eqnarray*}
To state conveniently, we set
$$
\bar{U}^\ell:=\left\lceil\frac{\min\{\lfloor 2^{\ell-1}\frac{M}{s}s'\rfloor,s'\}}{M}\right\rceil,$$ $k_\ell:=\bar{k}\cdot\bar{U}^\ell,
$ $K:=k_1+\cdots+k_L$ and $v:=\exp\left(-\bar{k}\frac{1-\delta_{s+KM}}{1+\delta_{M}}\right)$, and we will choose $\bar{k}$ late.  We use (\ref{eq:itexp10})
 and a similar argument in the proof of Theorem \ref{th:ommpt0} to obtain that
\begin{eqnarray}
\|\y-A\x^K\|_2^2\,\,&\leq&\,\, \sum_{\ell=0}^L v^{L-\ell}\|A\tilde{\x}^\ell\|_2^2\nonumber\\
& \leq& \frac{(1+\delta_s)\|\tilde{\x}^{L-1}\|_2^2}{\mu}\sum_{\ell=0}^L (\mu v)^{L-\ell}\nonumber\\
& \leq&  \frac{(1+\delta_s)\|\tilde{\x}^{L-1}\|_2^2}{\mu(1-\mu v)}.\label{eq:leq10}
\end{eqnarray}
Note that
\begin{eqnarray} \label{eq:leq20}
\|\y-A\x^K\|_2^2& \geq& \|A(\x-\x^K)\|_2^2 \nonumber\\
 & \geq& (1-\delta_{s+KM})\|\x-\x^K\|_2^2\nonumber\\
 & \geq &(1-\delta_{s+KM})\|\x_{\overline{\Lambda^K}}\|_2^2.
\end{eqnarray}
Combining (\ref{eq:leq10}) and (\ref{eq:leq20}), we arrive at
$$
\|\x_{\overline{\Lambda^K}}\|_2^2\leq  \frac{(1+\delta_s)}{(1-\delta_{s+KM})\mu(1-\mu v)} \|\tilde{\x}^{L-1}\|_2^2.
$$
We can choose $\bar{k}=2$,  $\mu={1}/{(2v)}$, and $\delta_{s+KM}\leq \delta_{14s}\leq {1}/{10}$. And hence $v\leq \exp(-{18}/{11})$ and $\mu ={1}/{(2\nu)}>2$.
Here, we use $s+KM\leq  13s$ with
\begin{eqnarray*}
K&=&k_1+\cdots+k_L\\
&\leq& \bar{k}(1+\cdots+2^{L-1})\frac{s'}{s}+\bar{k}L\leq 2^L\bar{k}\frac{s'}{s}+\bar{k}L\\
&\leq& 4\bar{k}\frac{s'}{M}+\bar{k}L
\leq 8\frac{s'}{M}+4+2\log_2\frac{s}{M}.
\end{eqnarray*}
 Then
$$
\frac{(1+\delta_s)}{(1-\delta_{s+KM})\mu(1-\mu v)}\,\,<\,\,1,
$$
which implies that
$$
\|\x_{\overline{\Lambda^K}}\|_2^2\,\,<\,\, \|\tilde{\x}^{L-1}\|_2^2.
$$
As a result, after $K$ iterations, we have
\begin{eqnarray*}
\# (S\setminus \Lambda^K)&\leq &\#((S\setminus \Lambda^0)\setminus {\rm supp}(\bu^{L-1}))-1 \\
&=&s'-2^{L-2}\frac{M}{s}s'-1,
\end{eqnarray*}
with
\begin{eqnarray*}
K=k_1+\cdots+k_L&\leq& \bar{k}(1+\cdots+2^{L-1})\frac{s'}{s}+\bar{k}L\\
&\leq& 2^L\bar{k}\frac{s'}{s}+\bar{k}L.
 \end{eqnarray*}
 Now we continue the algorithm from the iteration $K$. According to the induction assumption, we  have
$$
\# (S\setminus \Lambda^{K+\bar{n}})=0
$$
with
\begin{eqnarray*}
\bar{n}\leq \myceil{ 8\frac{s'-2^{L-2}\frac{M}{s}s'}{M}
 +4\cdot \ln 2\cdot\frac{s}{M}\log_2\left(s'-2^{L-2}\frac{M}{s}s'\right)}.
\end{eqnarray*}
Note that $L\geq 2$ and that
$$
2^L\bar{k}\frac{s'}{s}+8\frac{s'-2^{L-2}\frac{M}{s}s'}{M}\leq 8\frac{s'}{M}.
$$
 A simple calculation shows that
\begin{eqnarray*}
& &\bar{k}L+4\cdot \ln 2\cdot\frac{s}{M}\cdot\log_2\left(s'-2^{L-2}\frac{M}{s}s'\right)\\
& &=4\cdot \ln 2\cdot\frac{s}{M}\cdot\log_2s'+\bar{k}L+4\cdot \ln 2\cdot\frac{s}{M}\log_2\left(1-2^{L-2}\frac{M}{s}\right)\\
& &\leq 4\cdot \ln 2\cdot\frac{s}{M}\log_2(s'+1).
\end{eqnarray*}
 Then we arrive at
\begin{eqnarray*}
K+\bar{n} &\leq& 2^L\bar{k}\frac{s'}{s}+\bar{k}L+\bar{n} \\
&\leq& \myceil{8\frac{s'}{M}+4\cdot \ln 2 \cdot\frac{s}{M}\cdot\log_2(s'+1)},
\end{eqnarray*}
which implies the result.
\end{proof}

\begin{proof}[Proof of Theorem  \ref{th:ommpt1}]
According to Lemma \ref{le:slowdecay},   after $\OMMP(M)$ running ${\bar K}$ steps, we have
$$
S\,\,\subset\,\, \Lambda^{\bar K}
$$
where
$${\bar K}=\myfloor{8\frac{s'}{M}+8(C_0^2+1)M}\leq \myfloor{8(C_0^2+2)\frac{s'}{M}}.$$
Here we use the assumption of $M\leq \sqrt{s'}$.
 Since $\OMMP(M)$ chooses $M$ atoms at each iteration, we have
$$
\# \Lambda^{\bar K}\,\,\leq\,\, {\bar K}M\,\,\leq\,\, 8(C_0^2+2)s'+\#\Lambda^0.
$$
Noting that ${\rm Spark}(A)> 8(C_0^2+2){s'}+\#\Lambda^0$, we obtain that
$$
\argmin{{\mathbf z}\in \C^N, {\rm supp ({\mathbf z})}\subset\Lambda^{\bar K} }\|A{\mathbf z}-\y\|_2=\x
$$
 which implies that  $\OMMP(M)$ can recover the $s$-sparse signal $\x$ within
 $\myfloor{8(C_0^2+2)\frac{s'}{M}}$ iterations.
\end{proof}

\begin{proof}[Proof of Theorem \ref{th:ommpt2}]
By Lemma \ref{le:smalls}, we have
$$
S\,\,\subset\,\, \Lambda^{\bar K},
$$
since
 \begin{eqnarray*}
 {\bar K} &\leq&\myceil{8\frac{s}{M}+4\cdot \ln 2\cdot \frac{s}{M}\log_2(s+1)}\\
 &\leq& \myceil{8\frac{s}{M}\log_2(2(s+1))}=\myceil{\frac{8}{\alpha}\log_2(2(s+1))}.
 \end{eqnarray*}
 Here, we use the fact of $\Lambda^0=\emptyset$ and hence $\#(S\setminus \Lambda^0)=s$.
Also, noting that
$$
\#\Lambda^{\bar K}\leq {\bar K}M \leq 8 s \log_2(2(s+1))
$$
and
$$
 {\rm Spark}(A)>  8 s \log_2(2(s+1)),
$$
we obtain that
$$
\argmin{{\mathbf z}\in \C^N,\, {\rm supp ({\mathbf z})}\subset\Lambda^{\bar K} }\|A{\mathbf z}-\y\|_2=\x,
$$
 which implies  the result.
\end{proof}

\section{Proof of Theorem \ref{th:ompss}}

\begin{proof}
The proof proceed by induction. We assume that $\Lambda^\ell \subset {\rm supp}(\x)$ holds for $\ell=0,\ldots,n-1\leq s-1$. We next
consider $\Lambda^n$. Set
$$
\tilde{\x}^{n-1}:=\x_{\overline{\Lambda^{n-1}\cup \{j^{n-1}\}}},\qquad \bu:= \x_{{\Lambda^{n-1}\cup \{j^{n-1}\}}}
$$
where  $j^{n-1}$  is the indices of the largest entries of $\x_{\overline{\Lambda^{n-1}}}$ in magnitude.
Lemma \ref{th:ite} implies that
\begin{eqnarray*}
\|\y-A\x^{n}\|_2^2 \leq \|\y-A\x^{n-1}\|_2^2-
 \frac{(1-\delta_{s})} {\# (U\setminus \Lambda^{n-1})}\max\{0,\|\y-A\x^{n-1}\|_2^2-\|\y-A\bu\|_2^2 \},
\end{eqnarray*}
where $U:={\rm supp}(\bu)$. Noting that $\# (U\setminus \Lambda^{n-1})=1$, we have
\begin{eqnarray*}
\|\y-A\x^{n}\|_2^2
 & \leq& \|\y-A\x^{n-1}\|_2^2-
   (1-\delta_{n})\max\{0,\|\y-A\x^{n-1}\|_2^2-\|\y-A\bu\|_2^2 \}\\
    & \leq&  \|\y-A\x^{n-1}\|_2^2-
    (1-\delta_{n})\max\{0,\|\y-A\x^{n-1}\|_2^2-\|A\tilde{\x}^{n-1}\|_2^2 \}.
\end{eqnarray*}
We claim that
$$
\|\y-A\x^{n-1}\|_2^2\,\,\geq\,\, \|A\tilde{\x}^{n-1}\|_2^2.
$$
Then we have
\begin{eqnarray*}
\|\y-A\x^{n}\|_2^2 &\leq & \|\y-A\x^{n-1}\|_2^2-
    (1-\delta_{n})\max\{0,\|\y-A\x^{n-1}\|_2^2-\|A\tilde{\x}^{n-1}\|_2^2 \}\\
 &\leq&  \delta_n\|\y-A\x^{n-1}\|_2^2+(1-\delta_{n})\|A\tilde{\x}^{n-1}\|_2^2\\
&\leq &  \delta_s\|A(\x-\x^{n-1})\|_2^2+\|A\tilde{\x}^{n-1}\|_2^2 \\
&\leq & \delta_s (1+\delta_s)\|\x_{\overline{\Lambda^{n-1}}}\|_2^2+(1+\delta_s) \|\tilde{\x}^{n-1}\|_2^2 \\
&\leq & (1+\delta_s)\left(\delta_s+\frac{1}{\alpha^2}\right)\|\x_{\overline{\Lambda^{n-1}}}\|_2^2,
\end{eqnarray*}
Here, for the last inequality,  we use the fact of  $\|\tilde{\x}^{n-1}\|_2^2 \leq {\|\x_{\overline{\Lambda^{n-1}}}\|^2}/{\alpha^2}$ since $\x$ is $\alpha$-decaying.  On the other hand, we have
\begin{eqnarray*}
\|\y-A\x^n\|_2^2 &=& \|A(\x-\x^n)\|_2^2\,\, \geq\,\, \|A(\x-\x_{\Lambda^n})\|_2^2 \\
&\geq& \|A\x_{\overline{\Lambda^n}}\|_2^2\\
& \geq& (1-\delta_s) \|\x_{\overline{\Lambda^n}}\|_2^2.
\end{eqnarray*}
Combing the results above, we obtain that
$$
\|\x_{\overline{\Lambda^n}}\|_2^2 \leq \beta \|\x_{\overline{\Lambda^{n-1}}}\|_2^2
$$
where
$$
\beta=\frac{1+\delta_s}{1-\delta_s}\left(\delta_s+\frac{1}{\alpha^2}\right).
$$
Note that $\delta_s<\sqrt{2}-1$ and hence $2-(1+\delta_s)^2>0$. Then when
 $$
 \alpha > \sqrt{\frac{1+\delta_s}{2-(1+\delta_s)^2}},
 $$
 we have
 $$
 \beta<1.
 $$
  And hence,
$$
\|\x_{\overline{\Lambda^n}}\|_2^2 < \|\x_{\overline{\Lambda^{n-1}}}\|_2^2,
$$
which implies that ${\Lambda^n}\subset {\rm supp}(\x)$.

We remain to argue that
$$
\|\y-A\x^{n-1}\|_2^2\,\,\geq\,\, \|A\tilde{\x}^{n-1}\|_2^2.
$$
We assume that
$$
\|\y-A\x^{n-1}\|_2^2< \|A\tilde{\x}^{n-1}\|_2^2,
$$
and we shall derive a contradiction. The RIP property of the matrix $A$ implies that
\begin{eqnarray*}
(1-\delta_s)\|\x_{\overline{\Lambda^{n-1}}}\|_2^2 \leq \|\y-A\x^{n-1}\|_2^2
< \|A\tilde{\x}^{n-1}\|_2^2 \leq (1+\delta_s) \|\tilde{\x}^{n-1}\|_2^2.
\end{eqnarray*}
And hence,
$$
\|\x_{\overline{\Lambda^{n-1}}}\|_2^2 \leq \frac{1+\delta_s}{1-\delta_s}\|\tilde{\x}^{n-1}\|_2^2.
$$
Noting that $\alpha^2 \|\tilde{\x}^{n-1}\|_2^2 \leq \|\x_{\overline{\Lambda^{n-1}}}\|_2^2$, we have
$$
\alpha^2\leq \frac{1+\delta_s}{1-\delta_s},
$$
which contradicts with
$\alpha^2 > {\frac{1+\delta_s}{2-(1+\delta_s)^2}}$.
\end{proof}

 \end{appendices}

\bigskip \medskip

\noindent {\bf Authors' addresses:}

\medskip

\noindent Zhiqiang Xu,
LSEC, Institute of Computational Mathematics,
Academy of Mathematics and Systems Science
Chinese Academy of Sciences
Beijing 100190, CHINA
 {\tt Email: xuzq@lsec.cc.ac.cn}

\end{document}